\newcommand{\problemtitle}[1]{\gdef\@problemtitle{#1}}% Store problem title
\newcommand{\probleminput}[1]{\gdef\@probleminput{#1}}% Store problem input
\newcommand{\problemquestion}[1]{\gdef\@problemquestion{#1}}% Store problem question
  \par\addvspace{.5\baselineskip}
  \par\addvspace{.5\baselineskip}
\begin{document}
\title{Computing all-vs-all MEMs in grammar-compressed text\thanks{Supported by Academy of Finland (Grants 323233 and 339070), and by Basal Funds FB0001, Chile (first author)}}
%
%\titlerunning{Abbreviated paper title}
% If the paper title is too long for the running head, you can set
% an abbreviated paper title here
%
\author{Diego D\'iaz-Dom\'inguez \and Leena Salmela}
\authorrunning{D. D\'iaz-Dominguez et al.}
% First names are abbreviated in the running head.
% If there are more than two authors, 'et al.' is used.
%
\institute{Department of Computer Science, University of Helsinki, Finland
\email{\{diego.diaz,leena.salmela\}@helsinki.fi}}

\maketitle % typeset the header of the contribution
%HiFi reads are a new type of DNA sequencing data (strings) with a low error rate (mismatches), but prone to homopolymer errors. That is, when the sequencer scans consecutive nucleotides, the machine not always write the correct number of characters in the text. This problem makes the approximate alignment of HiFi reads more difficult. 

\begin{abstract} We describe a compression-aware method to compute all-vs-all maximal exact matches (MEM) among strings of a repetitive collection $\mathcal{T}$. The key concept in our work is the construction of a fully-balanced grammar $\mathcal{G}$ from $\mathcal{T}$ that meets a property that we call \emph{fix-free}: the expansions of the nonterminals that have the same height in the parse tree form a fix-free set (i.e., prefix-free and suffix-free). The fix-free property allows us to compute the MEMs of $\mathcal{T}$ incrementally over $\mathcal{G}$ using a standard suffix-tree-based MEM algorithm, which runs on a subset of grammar rules at a time and does not decompress nonterminals. By modifying the locally-consistent grammar of Christiansen et al.~\cite{christiansen2020optimal}, we show how we can build $\mathcal{G}$ from $\mathcal{T}$ in linear time and space. We also demonstrate that our MEM algorithm runs on top of $\mathcal{G}$ in $O(G +occ)$ time and uses $O(\log G(G+occ))$ bits, where $G$ is the grammar size, and $occ$ is the number of MEMs in $\mathcal{T}$. In the conclusions, we discuss how our idea can be modified to implement approximate pattern matching in compressed space.

\keywords{MEMs \and Text Compression \and Context-free grammars.}
\end{abstract}
\section{Introduction}

A \emph{maximal exact match} (MEM) between two strings is a match that cannot be extended without introducing mismatches or reaching an end in one of the strings. MEMs play an important role in biological sequence analyses~\cite{mummer,bowtie,li2013aligning} as they simplify finding long stretches of identical sequences. However, the rapid pace at which biological data have grown in later years has made the computation of MEMs intractable in these inputs. 

Seed-and-extend heuristics is a popular solution to scale the problem of approximate pattern matching in large collections~\cite{blast,blat,bowtie,li2013aligning,minimap2}. One of the aspects that impacts these heuristics' performance is the seeding mechanism. In this regard, using MEMs to seed alignments of near identical sequences, like genomes or proteins, offers two important benefits. First, it improves the accuracy of the result and reduces the cost of the heuristic's extension phase. The reason is because the result in an approximate alignment of highly-similar strings is indeed a sequence of long MEMs separated by small edits. Second, computing MEMs takes linear time~\cite{weiner1973linear,mccreight1976space}, and in small or medium-sized collections, it does not impose a considerable overhead. The problem, as mentioned before, is that genomic data have grown to a point where fitting the necessary data structures to detect MEMs into main memory is hardly possible.

State-of-the-art methods~\cite{sad07comp,ohl10cst,ros22mon,bou21pho,ros22fin,navcpmmem} address the problem of finding MEMs in large inputs by using compressed text indexes~\cite{g2018op,c2020gr} and matching statistics~\cite{cha94sub}. Although these approaches have demonstrated that string compression greatly reduces the overhead of computing MEMs in repetitive collections, they need to build the full index first, which can be expensive.

An efficient method that exploits text redundancies to compute all-vs-all MEMs in massive string collections could have important implications in Bioinformatics. Tasks like \emph{de novo} genome assembly, multiple genome alignment, or protein clustering could become available for inputs that are TBs in size. This result, in turn, could have a major impact on genomic research. 

\textbf{Our contribution} We present a method to compute all-vs-all MEMs in a collection $\mathcal{T}$ of repetitive strings. Our idea consists of building a context-free grammar $\mathcal{G}$ from $\mathcal{T}$, from which we compute the MEMs. Our grammar algorithm ensures a property that we call \emph{fix-free}, which means that the expansions of nonterminals with the same height form a set that is prefix-free and suffix-free. This idea enables the fast computation of MEMs by incrementally indexing parts of the grammar with simple data structures. Section 5 shows how we can build a fix-free grammar in linear time and space using a variant of the locally-consistent grammar of Christiansen et al.~\cite{christiansen2020optimal}. In Section 6, we describe how to compute a list $\mathcal{L}$ with the MEM ``precursors'' (prMEMs) from $\mathcal{G}$ in $O(G+|\mathcal{L}|)$ time and $O(\log G (G+|\mathcal{L}|))$ bits of space, where $G$ is the grammar size. In Section 6, we show how to enumerate all the $occ$ all-vs-all MEMs from $\mathcal{L}$ and $\mathcal{G}$, yielding thus an algorithm that runs in $O(G+occ)$ time and uses $O(\log G(G+occ))$ bits. 

\section{Preliminaries}

\subsubsection{String data structures}\label{sec:sa_st}
Consider a string $T[1..n-1]$ over the alphabet $\Sigma[2,\sigma]$, and the sentinel symbol $\Sigma[1]=\texttt{\$}$, which only occurs at $T[n]$. The \emph{suffix array}~\cite{MM93} of $T$ is a permutation $S\!A[1..n]$ that enumerates the suffixes $T[i..n]$ of $T$ in increasing lexicographic order, $T[S\!A[i]..n] < T[S\!A[i+1]..n]$, for $i \in [1..n-1]$. The \emph{longest common prefix} array $LCP[1..n]$~\cite{MM93} stores in $LCP[j]$ the longest common prefix between $T[S\!A[j-1]]$ and $T[S\!A[j]]$.
Given a vector $V[1..n]$ of integers, a \emph{minimum range query} $rmq(V, j, j')$, with $j<j'$, returns the minimum value in the arbitrary range $V[j..j']$. By encoding $LCP$ with support for $rmq$~\cite{fischermrq}, one can obtain the length of the prefix shared by two arbitrary suffixes $T[S\!A[j]..n]$ and $T[S\!A[j']..n]$ by performing $rmq(LCP, j, j')$.

\subsubsection{Locally-consistent parsing}\label{sec:lcp}
\emph{Parsing} consists in breaking a text $T[1..n]$ into a sequence of phrases. The parsing is \emph{locally-consistent}~\cite{cole1986deterministic} if, for any pattern $P$, its occurrences in $T$ are largely partitioned in the same way. There is more than one way to make a parsing locally consistent (see~\cite{sahinalp1994parallel,batu2006oblivious,christiansen2020optimal} for more details), but this work focuses on those using \emph{local minima}. A position $T[j]$ is a local minimum if $T[j-1]>T[j]<T[j+1]$. A method that uses this idea scans $T$, and for each pair of consecutive local minima $j$ and $j'$, it defines the phrase $T[j..j'-1]$. The procedure to compare adjacent positions in $T$ can vary. For instance, Christiansen et al.~\cite{christiansen2020optimal} first create a new string $\hat{T}$ where they replace equal-symbol runs by metasymbols. Then, they define a random permutation $\pi$ for the alphabet of $\hat{T}$, and compute the breaks as $\pi(\hat{T}[j-1])>\pi(\hat{T}[i])<\pi(\hat{T}[i+1])$. On the other hand, Nogueira et al.~\cite{n2018gr} compare consecutive suffixes rather than positions. Concretely, $T[j]$ is a local minimum if the suffix $T[j..n]$ is lexicographically smaller than the suffixes $T[j-1..n]$ and $T[j+1..n]$. This suffix-based local minimum was proposed by Nong et al.~\cite{n2009li} in their linear-time suffix array algorithm SAIS. They refer to it as an LMS-type position and to the phrases covering consecutive LMS-type positions as LMS-substrings.

\subsubsection{Grammar compression}
Grammar compression~\cite{ki2000gr,CLLPPSS05} is a form of lossless compression that encodes a string $T[1..n]$ as a context-free grammar $\mathcal{G}$ that only generates $T$. Formally, $\mathcal{G}=\{\Sigma, V, \mathcal{R}, S\}$ is a tuple of four elements, where $\Sigma$ is the set of terminals, $V$ is the set of nonterminals, $\mathcal{R}$ is a set of derivation rules in the form $X \rightarrow F$, with $X \in V$ being a nonterminal and $F \in (\Sigma \cup V)^{*}$ being its replacement, and $S \in V$ is the start symbol. In grammar compression, each nonterminal $X \in V$ appears only once on the left-hand sides of $\mathcal{R}$, which ensures the unambiguous decompression of $T$.

The graphical sequence of derivations that transforms $S$ into $T$ is referred to as the parse tree. The root of this tree is labelled $S$ and has $|A|$ children, with $S \rightarrow A$. The root's children are labelled from left to right according to $A$'s sequence. The subtrees for the root's children are recursively defined in the same way.
The height of a nonterminal $X$ is the longest path in the parse subtree induced by $X$'s recursive expansion. The grammar is said to be fully-balanced
%\emph{balanced} if its height is $O(\log n)$, and it is fully-balanced
if, for each nonterminal $X \in V$ at height $i$, the symbols in the right-hand side of $X \rightarrow A_1 A_2{\cdots}A_{x} \in \mathcal{R}$ are at height $i-1$.

The \emph{grammar tree} is a pruned version of the parse tree that, for each $X \in V$, keeps only the leftmost internal node labelled $X \in V$. The remaining internal nodes labelled $X$ are transformed into leaves. The leaves of the grammar tree induce a partition over $T$: for each grammar tree leaf $u$, its corresponding phrase is the substring in $T$ mapping the terminal symbols under the parse tree node from which $u$ was originated. One can classify the occurrences in $T$ of a pattern $P \in \Sigma^{*}$ into primary and secondary. A \emph{primary occurrence} of $P$ crosses two or more phrases in the partition induced by the grammar tree. On the other hand, a \emph{secondary occurrence} of $P$ is fully contained within a phrase.

\subsubsection{Locally-consistent grammar}\label{sec:lcg}

A grammar $\mathcal{G}$ generating $T[1..n]$ is locally consistent if the occurrences of the same pattern are largely compressed in the same way~\cite{jez2015approximation,christiansen2020optimal}. A mechanism to build $\mathcal{G}$ is by transforming $T$ in successive rounds of locally-consistent parsing~\cite{christiansen2020optimal,diaz21gram,n2018gr}. In every round $i$, the construction algorithm receives as input a string $T^{i}[1..n^{i}]$ over an alphabet $\Sigma^{i}$, which represents a partially-compressed version of $T$ (when $i=1$, $T^{1}=T$ and $\Sigma^{1}=\Sigma$). Then, it breaks $T^{i}$ using its local minima and creates a set $\mathcal{S}^{i}$ with the distinct phrases of the parsing. For every phrase $F \in \mathcal{S}^{i}$, the algorithm defines a new metasymbol $X$ and appends the new rule $X \rightarrow F$ into $\mathcal{R}$. The value of $X$ is the number of symbols in $\Sigma \cup V$ before the parsing round $i$ plus the rank of $F$ in an arbitrary order of $\mathcal{S}^{i}$. After creating the new rules, the algorithm replaces the phrases in $T^{i}$ with their corresponding metasymbols to produce another string $T^{i+1}[1..n^{i+1}]$, which is the input for the next round $i+1$. Notice that the alphabet $\Sigma^{i+1} \subset V$ for $T^{i+1}$ is the subset of metasymbols that the algorithm assigned to the phrases in $T^{i}$.  When $T^{i+1}$ does not have local minima, the algorithm creates the final rule $S \rightarrow T^{i+1}$ for the start symbol $S$ of $\mathcal{G}$ and finishes. The whole process runs in $O(n)$ time and produces a fully-balanced grammar. Christiansen at al.~\cite{christiansen2020optimal} showed that it is possible to obtain a locally-consistent grammar of size $O(\gamma \log n/\gamma)$ in $O(n)$ expected time, where $\gamma$ is the size of the smallest string attractor for $T$.

\section{Definitions}\label{sec:def}

Let $T[1..n]$ be a string of length $n$. We use the operator $T[1..j]$ to denote the $jth$ prefix of $T$ and the operator $T[j..]=T[j..n]$ to denote its $jth$ suffix. 

\begin{definition}
MEM: let $T_x[1..n_x]$ and $T_y[1..n_y]$ be two strings over the alphabet $\Sigma$. A maximal exact match $T_a[a..b]=T_y[a'..b']$ of length $\ell$, denoted $M(T_x, T_y, a , a', \ell)$, has the following properties: (i) $a=1$ or $a'=1$, or $T_x[a-1]\neq T_{y}[a'-1]$. (ii) $b=n_x$ or $b'=n_y$, or $T_x[b+1]\neq T_{y}[b'+1]$.
\end{definition}

We formulate the problem we address in this work as follows:

\begin{problem2}\label{def:memprob}
  \problemtitle{\textsc{AvAMem}}
  \probleminput{a string collection $\mathcal{T} = \{T_{1}, T_{2}, \ldots, T_{m}\}$ and an input integer $\tau$.}
  \problemquestion{every possible $M(T_x, T_y, a, a', \ell)$ with $T_{x}, T_{u} \in \mathcal{T}$ and $\ell \geq \tau$.}
\end{problem2}

We will refer to the parsing of Section~\ref{sec:lcp} as \textsc{LCPar}, and the grammar algorithm of Section~\ref{sec:lcg} as \textsc{LCGram}. Let $\mathcal{G}=(\Sigma, V, \mathcal{R}, S)$ be the fully-balanced grammar constructed by \textsc{LCGram} from $\mathcal{T}$ in $h$ rounds of parsing. We assume for the moment that the definition of local minima is arbitrary, but sequence-based. We denote the sum of the right-hand side lengths of $\mathcal{R}$ as $G$ and the number of nonterminals as $g=|V|$. Additionally, we consider the partition $V=\{V^{1},\ldots,V^{h}\}$ such that every $V^{i}$, with $i \in [1..h]$, has the nonterminals generated during the parsing round $i$. Similarly, the partition $\mathcal{R}=\{\mathcal{R}^{1}, \ldots, R^{h}\}$ groups in $\mathcal{R}^{i}$ the rules generated during the parsing round $i$. We denote as $G^{i}$ the sum of the right-hand side lengths of $\mathcal{R}^{i}$ and $g^{i}=|V^{i}|$. We will refer to the sequence $[1..h]$ as the \emph{levels} of the grammar, which is read bottom-up in the parse tree. $\mathcal{T}$ is at level $0$.

We assume \textsc{LCGram} compresses the strings of $\mathcal{T}$ independently but collapses all the rules in one single grammar $\mathcal{G}$. Thus, the rule $S \rightarrow A_1..A_{m'}$ encodes the compressed strings of $\mathcal{T}$ concatenated in the string $A_1 \ldots A_{m'}$.   

The operator $exp(X)$ returns the string in $\Sigma^{*}$ resulting from the recursive expansion of $X \in V$. The function $exp$ can also receive as input the sequence $X_{1}{\cdots}X_{b} \in V^{*}$, in which case it returns the concatenation $exp(X_1){\cdots}exp(X_b)$. The operation $lcp^{i}(X, Y)$ receives two nonterminals $(X, Y) \in V^{i}$ and returns the longest common prefix between $exp(X)$ and $exp(Y)$. Similarly, the operator $lcs^{i}(X, Y)$ returns the longest common suffix between $exp(X)$ and $exp(Y)$.

\begin{definition}\label{def:ff}
A set $\mathcal{S}$ of strings is fix-free iff, for any pair $F, Q \in \mathcal{S}$, the string $F$ is not a suffix nor a prefix of $Q$, and vice-versa.
\end{definition}

\begin{definition}\label{def:ffg}
A grammar $\mathcal{G}$ is fix-free iff it is fully balanced, and for any level $i$, the set $\mathcal{S}^{i}=\{exp(X_{1}), \ldots, exp(X_{g^{i}})\}$, with $X_{1}, \ldots ,X_{g^{i}} \in V^{i}$, is fix-free.
\end{definition}

\begin{definition}\label{def:po_MEM}
Primary MEM (prMEM): let $M(T_{x}, T_{y}, a, 'a, \ell)$, with $T_x, T_{y} \in \mathcal{T}$, be a MEM. $M(T_{x}, T_{y}, a, 'a, \ell)$ is primary if both $T_{x}[a..a+\ell-1]$ and $T_{y}[a'..a'+\ell-1]$ are primary occurrences of the pattern $T_{x}[a..a+\ell-1]=T_{y}[a'..a'+\ell-1]$.
\end{definition}

\section{Overview of our algorithm}

We solve $\textsc{AvAMem}(\mathcal{T}, \tau)$ in three steps: (i) we build a fix-free grammar from $\mathcal{T}$ using a modification of \textsc{LCGram}, (ii) we compute a list $\mathcal{L}$ storing the prMEMs of $\mathcal{T}$, and (iii) we use $\mathcal{L}$ and $\mathcal{G}$ to report the positions in $\mathcal{T}$ of all the MEMs. 

The advantage of $\mathcal{G}$ being fix-free is that we can use the following lemma:

\begin{lemma}\label{lem:pomem}
Let $\mathcal{G}$ be a fix-free grammar. Two rules $X \rightarrow AZB,\ Y \rightarrow CZD \in \mathcal{R}^{i}$, with $Z \in V^{i-1*}$ and $A,B,C,D \in V^{i-1}$ yield a prMEM if $\ell = lcs^{i-1}(A, C) + exp(Z) + lcp^{i-1}(B,D) \geq \tau$.
\end{lemma}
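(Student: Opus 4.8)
The plan is to exhibit an explicit common substring of length $\ell$, show it cannot be extended on either side (so that it is a MEM), and then show that the two chosen occurrences each straddle a phrase boundary (so that the MEM is primary). Throughout I read $lcs^{i-1}(A,C)$ and $lcp^{i-1}(B,D)$ as the \emph{lengths} $s$ and $p$ of the respective longest common suffix and prefix, and $exp(Z)$ as $|exp(Z)|$. To build the match I would write $exp(X)=exp(A)\,exp(Z)\,exp(B)$ and $exp(Y)=exp(C)\,exp(Z)\,exp(D)$, which is legitimate because $\mathcal{G}$ is fully balanced, so $A,B,C,D$ and the symbols of $Z$ all live at level $i-1$ and $exp$ is well defined on them. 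Let $P$ be the last $s$ characters of $exp(A)$, followed by $exp(Z)$, followed by the first $p$ characters of $exp(B)$; by the definitions of $lcs^{i-1}$, $lcp^{i-1}$ and the shared factor $Z$, the identical string is also produced from $exp(C)$, $exp(Z)$ and $exp(D)$. Hence $P$ occurs inside both $exp(X)$ and $exp(Y)$ with $|P|=s+|exp(Z)|+p=\ell$, and since $X$ and $Y$ are reachable from $S$ they occur in the expansions of some $T_x,T_y\in\mathcal{T}$, turning the two copies of $P$ into an actual match $T_x[a..a+\ell-1]=T_y[a'..a'+\ell-1]$.

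Next I would prove maximality, which is where the fix-free hypothesis does the real work. For the right end, the characters at $T_x[a+\ell]$ and $T_y[a'+\ell]$ are the $(p+1)$-th symbols of $exp(B)$ and $exp(D)$. Because $\mathcal{S}^{i-1}$ is prefix-free and $B\neq D$, neither $exp(B)$ nor $exp(D)$ is a prefix of the other, so $p<\min(|exp(B)|,|exp(D)|)$; thus both characters exist, and by maximality of $p=lcp^{i-1}(B,D)$ they differ, giving condition (ii) (a string end applies when the relevant symbol is the last one expanded). The left end is symmetric, using suffix-freeness and $A\neq C$ to force $s<\min(|exp(A)|,|exp(C)|)$ and a genuine mismatch $T_x[a-1]\neq T_y[a'-1]$, i.e.\ condition (i). The point to stress is that fix-freeness is exactly what keeps the blocking mismatch \emph{inside} the single symbols $A,C$ (resp.\ $B,D$): without it, one expansion could be a suffix (prefix) of the other, the match would silently spill into the neighbouring symbols of the two rules, and comparing only $A$ with $C$ would report a non-maximal length.

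Finally I would establish the primary property. I would anchor $P$ at the \emph{leftmost} occurrences of $X$ and $Y$, where the grammar tree keeps $X$ and $Y$ as internal nodes; there, the junctions between consecutive children of $X$ (and of $Y$) separate consecutive grammar-tree leaves and are therefore phrase boundaries. Since the occurrence of $P$ contains all of $exp(Z)$ while reaching into $exp(A)$ and $exp(B)$, it straddles at least one such junction ($A\mid Z$, $Z\mid B$, or an internal junction of $Z$), so it spans two or more phrases and is a primary occurrence; the same holds on the $Y$ side, and by Definition~\ref{def:po_MEM} the MEM is a prMEM.

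I expect the maximality step to be the main obstacle, both because it is where fix-freeness must be invoked with care (the strict inequalities $s<\min(|exp(A)|,|exp(C)|)$ and $p<\min(|exp(B)|,|exp(D)|)$ are the crux, and they tacitly assume $A\neq C$ and $B\neq D$) and because the primary argument must still be verified in the degenerate configurations $s=0$, $p=0$, or $Z$ empty, where one has to check that a phrase boundary genuinely falls strictly inside the occurrence of $P$.
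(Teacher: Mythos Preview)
Your argument is essentially the paper's own, only spelled out in much greater detail: the paper's proof also builds the common block around $exp(Z)$, extends it left and right by $lcs^{i-1}(A,C)$ and $lcp^{i-1}(B,D)$, and invokes the fix-free property to guarantee that the blocking mismatches fall strictly inside the comparisons $exp(A)$ vs.\ $exp(C)$ and $exp(B)$ vs.\ $exp(D)$. Your treatment goes further than the paper's in two respects---you explicitly verify the \emph{primary} condition (the paper's proof does not spell this out) and you isolate the tacit hypotheses $A\neq C$, $B\neq D$ and the degenerate cases $s=0$, $p=0$, $Z=\varepsilon$---but the underlying route is the same.
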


\begin{proof}
Both $exp(X)$ and $exp(Y)$ contain $exp(Z)$ as a substring as the rules for $X$ and $Y$ have occurrences of $Z$, and there is only one string $exp(Z)$ the grammar can produce. Still, the strings $exp(A)$ and $exp(C)$ (respectively, $exp(B)$ and $exp(D)$) could share a suffix (respectively, a prefix), meaning that the prMEM extends to the left of $Z$ (respectively, the right of $Z$). The fix-free property guarantees that the left boundary of the prMEM lies at some index in the right-to-left comparison of $exp(A)$ and $exp(C)$, and that the right boundary lies at some position within the left-to-right comparison of $exp(B)$ and $exp(D)$.
\end{proof}

Lemma~\ref{lem:pomem} offers a simple solution to detect prMEMs as we do not have to look into other parts of $\mathcal{G}$'s parse tree to check the boundaries of the prMEM in $(X, Y)$. The only remaining aspects to consider are, first, how to get a fix-free grammar, and then, how to compute $lcs(A, C)$, $lcp(B, D)$, and $|exp(Z)|$ efficiently. Once we solve these problems, finding prMEMs reduces to run a suffix-tree-based MEM algorithm over the right-hand sides of each $\mathcal{R}^{i}$. %The advantage, of course, is that the suffix tree of $\mathcal{R}^{i}$ is significantly smaller than the suffix tree of $\mathcal{T}$.
On the other hand, getting the positions in $\mathcal{T}$ of the MEMs (step 3.) requires traversing the rules of $\mathcal{G}$, but it is not necessary to perform any string comparison.

\section{Building the fix-free grammar}

We first describe the parsing we will use in our variant of \textsc{LCGram}. %$\mathcal{S}$ of parsing phrases is fix-free. %We show that, by extending the context of every phrase of \textsc{LCPar} a constant number of symbols to the left and to the right, the result is a fix-free set.
We refer to this procedure as \textsc{FFPar}. The input is a string collection $\mathcal{T}'$ over the alphabet $\Sigma' = \{\texttt{\$}\} \cup \Sigma \cup \{\texttt{\#}\}$, where each string $T_x=\texttt{\$}T\texttt{\$}\texttt{\#} \in \mathcal{T}'$ is flanked by the symbols $\{\texttt{\$}, \texttt{\#}\} \notin \Sigma$ that do not occur in the internal substring $T \in \Sigma^{*}$.

We choose a function $h: \Sigma' \rightarrow [0..p+1]$ that maps symbols in $\Sigma$ to integers in the range $[1..p]$ uniformly at random, where $p> |\Sigma|$ is a large prime number. If $c \in \Sigma$, then $h(c)= (ac+b) \bmod p$, where $a,b \in [1..p]$ are chosen uniformly at random. If $c=\texttt{\$} \notin \Sigma$, then $h(c)=0$, and if $c=\texttt{\#}$, then $h(c)=p+1$. 

We use $h$ to define the local minima in $\mathcal{T}'$. The idea is to combine $h$ with a scheme to classify symbols similar to that of SAIS~\cite{n2009li}.
Let $T_{x}[1..n_x]$ be a string in $\mathcal{T}'$. A position $T_{x}[j]$ with $j \in [2..n_x-1]$ has three possible classifications:

\begin{itemize}
\item L-type : $h(T_{x}[j])>h(T_{x}[j+1])$ or $T_{x}[j]=T_{x}[j+1]$ and $T_{x}[j+1]$ is L-type.
\item S-type : $h(T_{x}[j])<h(T_{x}[j+1])$ or $T_{x}[j]=T_{x}[j+1]$ and $T_{x}[j+1]$ is S-type.
\item LMS-type : $T_{x}[j]$ is S-type and $T_{x}[j-1]$ is L-type.
\end{itemize}

The LMS-type positions are the local minima of $\mathcal{T}'$. Randomising the local minimum definition aims to protect us against adversarial inputs. Notice that $h$ is not a random permutation $\pi$ like Christiansen et al.~\cite{christiansen2020optimal}, but it still defines a total order over $\Sigma$ as it does not assign the same random value to two symbols, which is enough for our purposes. %Besides, a hash function requires $O(1)$ space, while a random permutation requires space proportional to $\Sigma$, which is less convenient for large alphabets. 

\begin{lemma}\label{lem:ffpar}
\textsc{FFPar}: let $T_{x}[1..n_x] \in \mathcal{T}'$ be a string in the collection. For every pair of consecutive local minima $j>1$ and $j'<n_x$, we define the phrase $T_{x}[j-1..j'+1]$. For the leftmost local minimum $T_{x}[j]$, we define the phrase $T_{x}[1..j+1]$. For the rightmost local minimum $T_x[j']$, we define $T_x[j'-1..n_x]$. The resulting set of parsing phrases is fix-free.
\end{lemma}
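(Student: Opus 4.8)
The plan is to first pin down the exact shape of the phrases and then prove suffix-freeness and prefix-freeness separately, exploiting the asymmetry that the L-, S-, and LMS-classification of a position depends only on its \emph{right} context.

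\emph{Structural setup.} First I would record what the flanking sentinels force. Since the sentinels carry the extreme hash values ($0$ for \texttt{\$} and $p+1$ for \texttt{\#}), position $1$ holds the unique leading \texttt{\$}, position $n_x$ holds the unique \texttt{\#}, and the second \texttt{\$} at position $n_x-1$ is S-type with an L-type predecessor, so it is the rightmost local minimum $m_k=n_x-1$; symmetrically the leading \texttt{\$} (being smallest, hence not L-type) forces the first local minimum to satisfy $m_1\ge 3$. With $m_1<\dots<m_k$ the local minima, the key invariant is then: a middle phrase $T_x[m_i-1..m_{i+1}+1]$ contains \emph{exactly} two local minima, at its positions $2$ and $(|F|-1)$, and has length $\ge 4$; the leftmost phrase $T_x[1..m_1+1]$ begins with \texttt{\$} and carries its single minimum at position $(|F|-1)$; and the rightmost phrase is $T_x[n_x-2..n_x]$, of length $3$, ending in \texttt{\$\#}. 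The two boundary phrases are disposed of by sentinel uniqueness: any phrase having the leftmost one as a prefix or suffix would have to carry the leading \texttt{\$} in its interior, and the rightmost one is the only phrase containing \texttt{\#}.

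\emph{Suffix-freeness (the easy direction).} The guiding fact is that the type of a position is a function of the suffix starting there: it is settled by the nearest strict hash change to its right. Hence, if a phrase $F$ were a proper suffix $Q[s..]$ of another phrase $Q$ (so $s\ge 2$), the classification of $F$'s \emph{first} local minimum, at $F$'s position $2$, transfers verbatim to position $s+1$ of $Q$, because that minimum is witnessed by the ascent that immediately follows it \emph{inside} $F$ (the phrase rises to a peak before descending to its second minimum). This would force a local minimum at position $s+1$ of $Q$; but $Q$'s only interior minimum sits at position $|Q|-1$, so $s+1\in\{2,|Q|-1\}$, and both options clash with $s\ge 2$ together with the length-$\ge 4$ bound. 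The short rightmost phrase and the leftmost phrase are handled by the sentinel remarks above.

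\emph{Prefix-freeness and the main obstacle.} I would try to mirror the previous argument using $F$'s \emph{last} local minimum at position $|F|-1$: if $F=Q[1..|F|]$ with $|F|<|Q|$, that minimum sits strictly inside $Q$, where it must \emph{not} be a minimum, and that should be the contradiction. The hard part is exactly the asymmetry of the classification: whereas the first minimum's witnessing ascent lies inside the phrase, the last minimum's witnessing ascent lies one position \emph{past} the phrase's right end, so when the valley bottom is flat ($F[|F|-1]=F[|F|]$) the minimum's S-type is \emph{not} witnessed within $F$, and a different continuation in $Q$ can legitimately reclassify that position as L-type, allowing $F$ to be a genuine prefix of $Q$. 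This flat-bottom case is the crux. I would resolve it by invoking the run-length step of the construction (as in Christiansen et al.), which makes adjacent symbols distinct and thereby turns every local minimum into a \emph{strict} valley $h(T_x[j'-1])>h(T_x[j'])<h(T_x[j'+1])$; both strict inequalities then lie within the one-symbol-wider window that each phrase already includes on each side, so the right boundary minimum becomes self-witnessing inside $F$ and transfers to $Q$, producing the forbidden interior minimum. Combining the two directions with the sentinel cases then yields fix-freeness.
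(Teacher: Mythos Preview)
Your route differs from the paper's: the paper first reasons about the non-overlapping \textsc{LCPar} phrases $T_x[j..j'-1]$ and argues that the one-left/two-right extension performed by \textsc{FFPar} separates any two of them that stood in a prefix or suffix relation, whereas you work directly with the extended phrases and transfer LMS status across matching windows. Your suffix-freeness direction is sound.

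The genuine gap is in your prefix-freeness step. You correctly isolate the flat-bottom obstacle --- when the right local minimum opens an equal-symbol run, the two-symbol right extension does not witness its S-type --- but you then resolve it by invoking ``the run-length step of the construction''. There is no such step in \textsc{FFPar} as defined in this paper; that device belongs to Christiansen et~al.'s parser, not to the present one. Without it the lemma is actually false. Take $h(\texttt{\$})=0<h(d)<h(a)<h(b)<h(c)<h(\texttt{\#})$: the string $\texttt{\$}cbcaab\texttt{\$\#}$ has LMS positions $3,5,8$ and middle phrase $cbcaa$, while $\texttt{\$}cbcaadc\texttt{\$\#}$ has LMS positions $3,7,9$ and middle phrase $cbcaadc$; the former is a proper prefix of the latter. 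The run $aa$ is exactly your crux: position~$5$ is S-type in the first string (the run is followed by $b$) but L-type in the second (the run is followed by $d$), and the two-symbol extension reaches only the second~$a$.

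For comparison, the paper's own proof glosses over the same point: it asserts that $T_x[j'..j'+1]\neq T_y[l+|F|..l+|F|+1]$ ``because $T_x[j']$ is a local minimum while $T_y[l+|F|]$ is not'', but in the example above both two-symbol windows equal $aa$ while the LMS status differs. So your diagnosis is sharper than the paper's; what is missing is that the remedy must be a change to the construction (e.g.\ a run-collapsing round) rather than an assumption slipped into the proof.
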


\begin{proof}
Let us first consider the set $\mathcal{S}$ created by \textsc{LCPar}. We will say that a phrase $T_{x}[j..j-1]$ in \textsc{LCPar} is an \emph{instance} of $F \in \mathcal{S}$ if $T_{x}[j..j'-1]$ matches $F$. When $F$ occurs in $\mathcal{T}'$ as a proper suffix or prefix of another phrase in $\mathcal{S}$, it is not an instance, only an occurrence of $F$.

Let $\mathcal{W} \subset \mathcal{S}$ be a subset of phrases and let $F \in \mathcal{S} \setminus \mathcal{W}$ be a phrase occurring as a proper prefix in each element of $\mathcal{W}$. Consider any pair of instances $F=T_{x}[j..j'-1]$ and $Q=T_{y}[l..l'-1] \in \mathcal{W}$ such that $T_{x},T_{y} \in \mathcal{T}'$, and $(l,l')$ (respectively, $(j,j')$) are consecutive local minima. The substring $T_{x}[j'..j'+1]$ following $F$'s instance cannot be equal to the substring $T_{y}[l+|F|..l+|F|+1]$ following $F$'s occurrence within $Q$ because $T_{x}[j']$ is a local minimum while $T_{y}[l+|F|]$ is not. We know that $T_{y}[l+|F|]$ is within $Q$ because $T_{y}[l..l+|F|-1]$ is an occurrence of $F$ that is a proper prefix of $Q$. Therefore, if $T_{y}[l+|F|]$ were a local minimum, $Q$ would also be an instance of $F$. In conclusion, running \textsc{FFPar} will right-extend every instance of $F$ in $\mathcal{T}'$ such that none of the resulting phrases is a proper prefix in the right-extended phrases of $\mathcal{W}$.

We develop a similar argument for the left extension. Suppose, in this case, $F \in \mathcal{S} \setminus \mathcal{W}$ is a proper suffix in $\mathcal{W} \subset \mathcal{S}$. As before, we assume $\mathcal{S}$ was built using \textsc{LCPar} and that $F=T_{x}[j..j'-1]$ and $Q=T_{y}[l..l'-1] \in \mathcal{W}$ are phrase instances. The symbol $T_{x}[j-1]$ preceding $F$'s instance cannot be equal to the symbol $T_{y}[l'-|F|-1]$ preceding the occurrence $F=T_{y}[l'-|F|..l'-1]$ within $Q$. The reason is because $F[1]=T_{x}[j]$ is a local minimum, while $Q[|Q|-|F|+1]=T_{y}[l'-|F|]=F[1]$ is not. In conclusion, \textsc{FFPar} left-extends each occurrence of $F$ such that none of the resulting phrases is a proper suffix in the left-extended elements of $\mathcal{W}$.
\end{proof}

\subsubsection{The \textsc{FFGram} algorithm}

\textsc{FFGram} is our \textsc{LCGram} variant that builds a fix-free grammar by applying successive rounds of \textsc{FFPar}  (Definition~\ref{lem:ffpar}). The input of \textsc{FFGram} is the collection $\mathcal{T}^{1}$ built from $\mathcal{T}$ by adding the special flanking symbols $\texttt{\$}T_{x}\texttt{\$}\texttt{\#}$ in every $T_{x} \in \mathcal{T}$. The alphabet of $\mathcal{T}^{1}$ is $\Sigma^{1}=\{\texttt{\$}\} \cup \Sigma \cup \{\texttt{\#}\}$. 

In every round $i$, we create a random hash function $h^{i} : \Sigma^{i} \rightarrow [0..p+1]$, with $p>|\Sigma^{i}|$, to define the local minima of $\mathcal{T}^{i}$. Then, we run \textsc{FFPar} using $h^{i}$ and sort the resulting set $\mathcal{S}^{i}$ of phrases in lexicographical order starting from the leftmost proper suffix of each string. The relative order of strings of $\mathcal{S}^{i}$ differing only in the leftmost symbol is arbitrary. Let $c$ be the number of symbols in $\Sigma \cup V$ before parsing round $i$ and let $r$ the rank of $F \in \mathcal{S}^{i}$ in the ordering we just defined. We assign the metasymbol $X=c+r \in V^{i}$ to $F$ and append the new rule $X \rightarrow F\in \mathcal{R}^{i}$. The last step in the round is to create the collection $\mathcal{T}^{i+1}$ by replacing the phrases with their corresponding metasymbols. Additionally, we define two special new symbols $\{\$, \#\}$, which we append at the ends of the strings in $\mathcal{T}^{i+1}$. Thus, the alphabet of $\mathcal{T}^{i}$ becomes $\Sigma^{i} = \{\$\} \cup V^{i} \cup \{\#\}$. We assume the occurrences of the special symbols $\#,\$$ in the right-hand sides of $\mathcal{R}$ expand to the empty string. \textsc{FFGram} stops after $h$ parsing rounds, when the input $\mathcal{T}^{i}$ does not have local minima. Figure~\ref{fig:ovp_gramp} shows an example. 
 
\begin{figure}[t]
\centering
\includegraphics[width=0.75\textwidth]{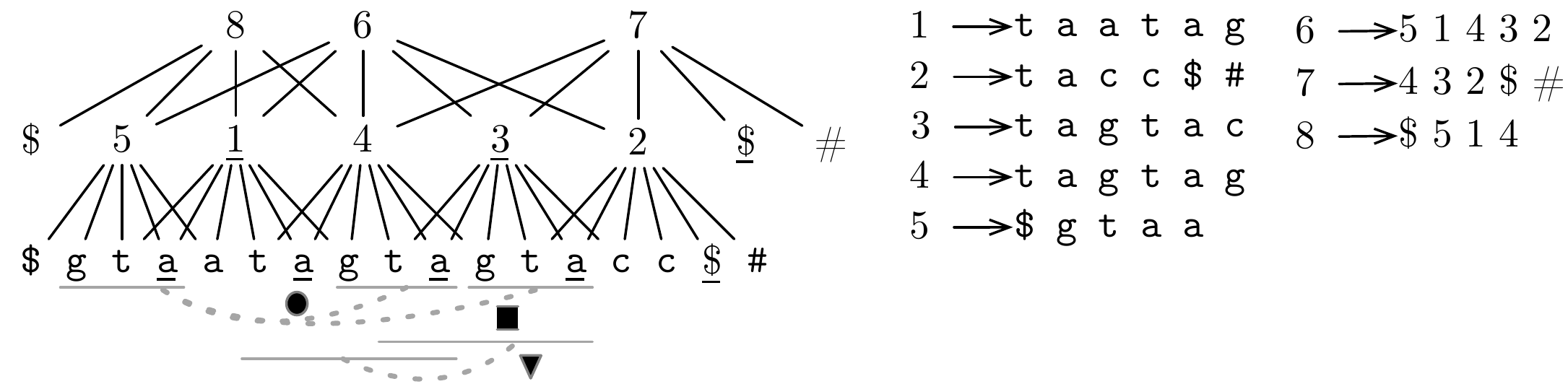}
\caption{Execution of \textsc{FFGram} on \texttt{\$gtaatagtagtacc\$\#}. The left side shows the parse tree up to level $2$, and the right side shows the corresponding rules for those levels. The underlined symbols are local minima. The horizontal lines in grey are MEMs of length $\geq \tau = 3$. Each shape identifies a specific MEM in Figures~\ref{fig:prmem_it1} and \ref{fig:prmem_it2}.}
\label{fig:ovp_gramp}
\end{figure}

\begin{lemma}\label{lem:ffg}
The grammar $\mathcal{G}$ resulting from running \textsc{FFGram} over $\mathcal{T}$ is fix-free.
\end{lemma}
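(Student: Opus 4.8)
The plan is to prove Lemma~\ref{lem:ffg} by induction on the grammar levels $i \in [1..h]$, showing that each set $\mathcal{S}^{i}=\{exp(X_1),\ldots,exp(X_{g^i})\}$ of expansions of level-$i$ nonterminals is fix-free, in the sense of Definition~\ref{def:ff}. Since \textsc{FFGram} is a composition of rounds of \textsc{FFPar}, and Lemma~\ref{lem:ffpar} already establishes that a single round of \textsc{FFPar} produces a fix-free set of \emph{phrases}, the real work is to lift that statement from phrases (strings over $\Sigma^i$) to full expansions (strings over $\Sigma$), propagating fix-freeness through the levels.

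First I would handle the base case. At level $1$, the phrases of $\mathcal{T}^{1}$ are substrings over $\Sigma^{1}=\{\texttt{\$}\}\cup\Sigma\cup\{\texttt{\#}\}$, and $exp$ acts almost as the identity (modulo the convention that $\texttt{\$},\texttt{\#}$ expand to the empty string), so the fix-freeness of $\mathcal{S}^{1}$ follows directly from Lemma~\ref{lem:ffpar} applied with $h^{1}$. I would just need to check that stripping the special flanking symbols does not destroy the prefix/suffix-free property; this is where the flanking $\texttt{\$}$ and $\texttt{\#}$ earn their keep, since the leftmost and rightmost phrases contain these unique boundary symbols.

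For the inductive step, assume $\mathcal{S}^{i-1}$ is fix-free, and consider round $i$. By Lemma~\ref{lem:ffpar}, the set of phrases $\mathcal{S}^{i}$ produced by \textsc{FFPar} over $\mathcal{T}^{i}$ is fix-free \emph{as a set of strings over the alphabet $\Sigma^{i}=\{\$\}\cup V^{i-1}\cup\{\#\}$}. Take two distinct level-$i$ nonterminals $X \to F$ and $Y \to F'$ with $F,F' \in \mathcal{S}^{i}$, and suppose for contradiction that $exp(X)$ is a prefix of $exp(Y)$ (the suffix case is symmetric, and the argument is the same with $lcs$ in place of $lcp$). The key step is to show that a prefix relation between the expansions $exp(F)$ and $exp(F')$ forces a prefix relation between the phrases $F$ and $F'$ themselves over $\Sigma^{i}$, contradicting the fix-freeness of $\mathcal{S}^{i}$. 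I would write $F=A_1\cdots A_k$ and $F'=B_1\cdots B_m$ with $A_j,B_j \in \Sigma^{i}$, and walk along the two phrases symbol by symbol. The crux is that when the expansions align, the induction hypothesis (each $exp(A_j)$ comes from the fix-free set $\mathcal{S}^{i-1}$) prevents the boundaries between the expanded pieces from ``drifting'': since $\mathcal{S}^{i-1}$ is prefix-free, $exp(A_1)$ being a prefix of $exp(B_1)$ forces $A_1=B_1$, and then one proceeds inductively along the symbols, so the phrase-level boundaries must coincide and $F$ must be a prefix of $F'$.

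I expect the main obstacle to be exactly this boundary-alignment argument — ruling out the possibility that the expansion of $X$ ends strictly inside the expansion of some symbol $B_j$ of $F'$, i.e.\ that a single level-$(i-1)$ expansion is split by the prefix boundary. Fix-freeness of $\mathcal{S}^{i-1}$ is precisely the tool that forecloses this: if $exp(A_1\cdots A_{j-1})$ matches a prefix of $exp(B_1\cdots B_{j-1})$ with the boundaries agreeing up to position $j-1$, then $exp(A_j)$ and $exp(B_j)$ must start at the same offset, and one being a prefix of the other forces $A_j=B_j$ by prefix-freeness. The care needed is to set up this synchronisation cleanly and to treat the special symbols $\$,\#$ consistently (they expand to $\varepsilon$, so one must argue they cannot create spurious prefix relations, again using that they occur only at string boundaries). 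Once the boundaries are shown to stay synchronised, the reduction from $exp$-level to phrase-level is immediate and Lemma~\ref{lem:ffpar} closes the induction.
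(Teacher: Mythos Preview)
Your proposal is correct and follows essentially the same route as the paper: induction on the grammar level, with Lemma~\ref{lem:ffpar} for the base case and the inductive hypothesis on $V^{i-1}$ to argue that expansions of level-$i$ phrases cannot prefix/suffix one another. The only cosmetic difference is framing: the paper argues directly (two distinct phrases $F,Q\in\mathcal{S}^{i}$ first differ at some position $q{+}1$, and since $F[q{+}1],Q[q{+}1]\in V^{i-1}$ have fix-free expansions, $exp(F)$ and $exp(Q)$ diverge there), whereas you argue the contrapositive (an $exp$-level prefix relation forces a phrase-level prefix relation via boundary synchronisation, contradicting Lemma~\ref{lem:ffpar}); both rest on the same boundary-alignment insight.
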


\begin{proof}
Consider the execution of \textsc{FFPar} over $\mathcal{T}^{1}$ in the first round of \textsc{FFGram}. The output set $\mathcal{S}^{1}$ is over the alphabet $\{\texttt{\$}\} \cup \Sigma \cup \{\texttt{\#}\}$ and is, by Lemma~\ref{lem:ffpar}, fix-free. Thus, the symbols of $V^{1}$ meet the fix-free property of Definition~\ref{def:ffg}. Now consider the parsing rounds $i-1$ and $i\geq 2$. Assume without loss of generality that the nonterminals in $V^{i-1}$ meet Definition~\ref{def:ffg}. The recursive definition of \textsc{FFGram} implies that the phrases in $\mathcal{S}^{i}$ are over the alphabet $V^{i-1}$. Thus, for any pair of different strings $F,Q \in \mathcal{S}^{i}$ sharing a prefix $F[1..q]=Q[1..q]$, the symbols $F[q+1]\neq E[q+1]$ expand to different sequences $exp(F[q+1])\neq exp(Q[q+1])$ that are not prefix one of the other as $F[f+1],Q[q+1]$ belong to $V^{i-1}$. Therefore, $exp(F)$ and $exp(Q)$ do not prefix one to the other either. The same argument applies when $F$ and $Q$ share a prefix. We conclude then that the metasymbols of $V^{i}$ also meet the fix-free property of Definition~\ref{def:ffg}.   
\end{proof}

\subsubsection{Overlaps in the fix-free grammar}

A relevant feature of \textsc{FFGram} is that consecutive nonterminals in the right-hand sides of $\mathcal{R}$ cover overlapping substrings of $\mathcal{T}$. This property allows us to compute prMEMs as described in Lemma~\ref{lem:pomem}. The downside, however, is that expanding substrings of $\mathcal{T}$ from $\mathcal{G}$ is now more difficult. However, the number of symbols overlapping in every grammar level is constant (one to the left and two to the right). Depending on the situation, we might want to decompress nonterminals considering or skipping overlaps. Thus, we define the operations $efexp(X)$, $lexp(X)$, and $rexp(X)$ that return different types of nonterminal expansions.
These functions only differ in the edges they skip in $X$'s subtree during the decompression.

\begin{itemize}
\item $efexp(X)$: recursively skips the leftmost and two rightmost edges.
\item $lexp(X)$: recursively skips the two rightmost edges, and the leftmost edges when the parent node does not belong to the leftmost branch.
\item $rexp(X)$ recursively skips the leftmost edges, and the two rightmost edges when the parent node does not belong to the rightmost branch. 
\end{itemize}

Figure \ref{fig:ovp_exp} shows examples of $efexp(X)$, $lexp(X)$ and $rexp(X)$. We also modify the functions $lcs^{i}$ and $lcp^{i}$ described in Definitions~\ref{sec:def}. Let $X,Y \in V^{i}$ be two nonterminals at level $i$. The function $lcs^{i}(X, Y)$ now returns the longest common suffix between $lexp(X)$ and $lexp(Y)$; and $lcp^{i}(X, Y)$ returns the longest common prefix between $rexp(X)$ and $rexp(Y)$. 

We remark that $efexp$, $lexp$, $rexp$, $lcs^{i}$, and $lcp^{i}$ are virtual as our algorithm to find prMEMs never calls them directly. Instead, it incrementally produces satellite data structures with precomputed answers to solve them in $O(1)$ time.

\section{Computing prMEMs in the fix-free grammar}\label{sec:prmem}

Our MEM algorithm uses the grammar $\mathcal{G}=\{\Sigma, V, \mathcal{R}, S\}$ resulted from running \textsc{FFGram} with $\mathcal{T}'$, and produces the list $\mathcal{L}$ of prMEMs in $\mathcal{T}$. We consider a set $\mathcal{O}$ of $g=|\mathcal{R}|$ rules storing the cumulative lengths of the $efexp$ expansions for the right-hand sides of $\mathcal{R}$. We also define a logical partition for $\mathcal{O}$ according to the grammar levels. Let $X \rightarrow A_1A_2{\cdots}A_{x} \in \mathcal{R}^{i}$ be a rule at level $i$. The rule $X \rightarrow c_1 \cdots c_{x} \in \mathcal{O}^{i}$ stores in $c_{j}$, with $j \in [3..x-1]$, the value $c_{j} = efexp(A_{2})+\cdots+efexp(A_{j-1})$. To avoid recursive overlaps, we set $c_1=0, c_2=0$, and $c_{x-1}=c_{x-2}, c_{x}=c_{x-2}$. The leftmost tree of Figure~\ref{fig:ovp_exp} shows an example of a rule in $\mathcal{O}$. We assume \textsc{FFGram} already constructed $\mathcal{O}$.  

\begin{figure}[t]
\centering
\includegraphics[width=0.9\textwidth]{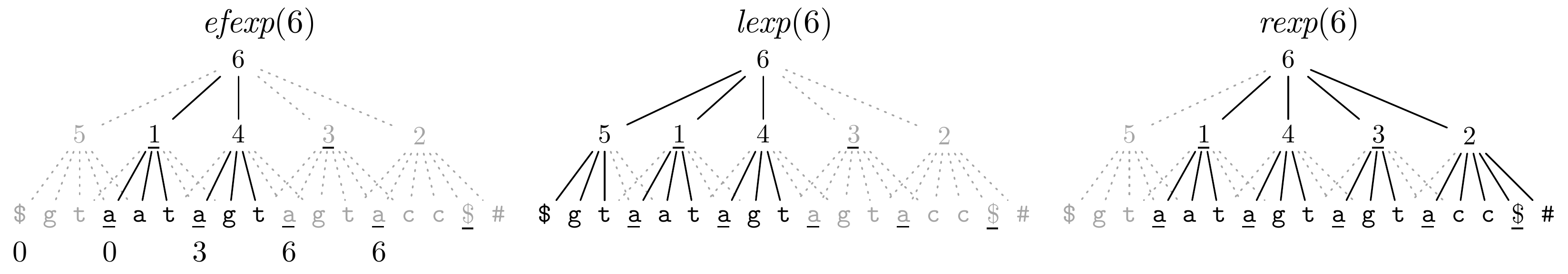}
\caption{Expansions for nonterminal $6$ of Figure~\ref{fig:ovp_gramp}. Dashed lines are skipped branches. The sequence of numbers below $efexp(6)$ corresponds to the rule $6 \rightarrow 0\ 0\ 3\ 6\ 6 \in \mathcal{O}$.}
\label{fig:ovp_exp}
\end{figure}

\paragraph{prMEM encoding.} Every element of $\mathcal{L}$ is a tuple $(X, Y, o_X, o_Y, \ell)$ of four elements. $X$ and $Y$ are the nonterminals labelling the lowest nodes in $\mathcal{G}$'s grammar tree that encode the primary occurrences for the MEM's  sequence. The fields $o_X$ and $o_Y$ are the number of terminal symbols preceding the prMEM within $efexp(X)$ and $efexp(Y)$ (respectively), and $\ell$ is the length of the prMEM.

\paragraph{Grammar encoding.} We encode every subset $\mathcal{R}^{i} \subset \mathcal{R}$ as an individual string collection concatenated in one single array $R^{i}[1..G^{i}]$. We create an equivalent array $O^{i}$ for the rules of $\mathcal{O}^{i} \subset \mathcal{O}$. We also consider a function $map(X)=j$ that indicates that the right-hand side $F=R^{i}[a..a']$ of $X\rightarrow F \in \mathcal{R}^{i}$ is the $jth$ string of $R^{i}$ from left to right. Additionally, we define the function $parent(b)=X$ that returns $X$ for each position $b \in [a..a']$. We assume $map$ and $parent$ are implemented in $O(1)$ time and $G_{i}+o(G_{i})$ bits using bit vectors.

Our prMEM algorithm is an iterative process that, each step $i$, searches for prMEMs in the rules of the grammar level $i$. Still, there is a slight difference between the iteration $i=1$ and the others $i>1$, so we explain them separately. 

\subsubsection{First iteration}\label{sec:f_it}

The first step in iteration $i=1$ is to create a sparse suffix array $S\!A$ for $R^{1}$ that discards each position $S\!A[j]$ meeting one of the following conditions: (i) $R^{i}[S\!A[j]]$ is the start of a phrase, $R^{i}[S\!A[j]+1]$ is the end of a phrase, or (iii) $R^{i}[S\!A[j]]$ is the end of a phrase. We refer to the resulting sparse suffix array as $A$. The next step is to produce the LCP array for $A$, which we name $GLCP$ for convenience. We then run the suffix-tree-based MEM algorithm~\cite{weiner1973linear} using $A$ and $GLCP$. We implement this step by simulating with $GLCP$ a traversal over the compact trie induced by the suffixes of $R^{i}$ in $A$ (see~\cite{lcplinear,enhsa} for the traversal). Every time the MEM algorithm reports a triplet $(A[u], A[u'], l)$ for a MEM $R^{i}[A[u]..A[u]+l-1]=R^{i}[A[u']..A[u']+l-1]$, we append the tuple $(X=parent(A[u])$, $Y=parent(A[u'])$, $o_X=O^{i}[A[u]]$, $o_Y=O^{i}[A[u']]$, $\ell=l$) into $\mathcal{L}$. Figure~\ref{fig:prmem_it1} exemplifies this algorithm. Once we finish running the MEM algorithm, we obtain satellite data structures for the next iteration $i+1$:

\begin{itemize}

\item A vector $P^{1}[1..g^1]$ encoding the permutation of $R^{1}$ resulted from sorting the $lexp$ expansions of the strings in colexicographical order.

\item A vector $LCS^{1}[1..g^{1}]$ storing LCS (longest common suffix) values between the $lexp$ expansions of strings in $R^{i}$ that are consecutive in the permutation $P^{1}$.
We encode $LCS^{1}$ with support for range minimum queries in $O(1)$ time. Thus, given two nonterminals $X,Y \in V^{1}$, we implement $lcs^{1}(X, Y)$ as $rmq(LCS^{1}, P^{1}[map(X)], P^{1}[map(Y)])$.

\item A vector $LCP^{1}[1..g^{1}]$ storing LCP values between the $rexp$ expansions of consecutive strings of $R^{1}$.
$LCP^{1}$ also supports $O(1)$-time $rmq$ queries so we implement $lcs^{1}(X, Y)$ as $rmq(LCP^{1}, X, Y)$. 
\end{itemize}

%Intuitively, the special definitions of $P^{1}$, $LCP^{1}$, $LCS^{1}$ and $E^{1}$ are intended to remove the redundancy generated by the overlap of the phrases generated by the subroutine $\textsc{FFLCPar}$. 

\subsubsection{Next iterations}

For $i>1$, we receive as input the tuple $(R^{i}, O^{i})$ and the vectors $P^{i-1}$, $LCS^{i-1}$, $LCP^{i-1}$. We assume $LCS^{i-1}$ and $LCP^{i-1}$ support $rmq$ queries in $O(1)$ time so we can implement $lcs^{i-1}$ and $lcp^{i-1}$ in $O(1)$ time as well. 

We compute $A$ and $GLCP$ for $R^{i}$ as in Section~\ref{sec:f_it}. However, we transform $GLCP$ to store the LCP values of the $rexp$ expansions of the suffixes of $R^{i}$ in $A$. Let $R^{i}[A[j]..]$ and $R^{i}[A[j+1]..]$ be two consecutive suffixes in $A$ sharing a prefix of length $GLCP[j+1]=l$. We update this value to $GLCP[j+1]=O^{i}[A[j+1]+l] + lcp^{i-1}(R^{i}[A[j]+l]), R^{i}[A[j+1]+l])$.

We use $GLCP$ and $A$ to simulate a traversal over the compact trie induced by the $rexp$ expansions of the $R^{i}$ suffixes in $A$. The purpose of the traversal is, again, to run the suffix-tree-based MEM algorithm. Every time this procedure reports a triplet $(A[u], A[u'], l)$ as a MEM, we compute $o=lcs(R^{i}[u-1], R^{i}[u'-1])$, and insert the tuple $(X=parent(A[u]), Y=parent(A[u'])), o_X=O^{i}[A[u]]-o+1, o_Y=O^{i}[A[u']]-o+1, \ell=o+l)$ into $\mathcal{L}$. Figure~\ref{fig:prmem_it2} shows an example. The final step in the iteration is to produce the satellite data structures:

\begin{itemize}
\item $P^{i}[1..g^{i}]$: we sort the strings in $R^{i}$ colexicographically using $P^{i-1}$. We define the relative order of any pair of strings $F, Q \in R^{i}$ by comparing their sequences $P^{i-1}[F[1]]{\cdots}P^{i-1}[F[|F|-2]]$ and $P^{i-1}[Q[1]]{\cdots}P^{i-1}[Q[|Q|-2]]$ from right to left. The resulting permutation $P^{i}$ has the following property: let $X, X' \in V^{i}$ be two nonterminals. If $P^{i}[map(X)]<P^{i}[map(X')]$, it means $lexp(X)$ is colexicographically equal or smaller than $lexp(X')$.

\item $LCS^{i}[1..g^{i}]$: we scan the strings of $R^{i}$ in $P^{i}$ order. Let $R^{i}[a..a']$ and $R^{i}[b..b']$ be two consecutive strings in the permutation of $P^{i}$. That is, $X=parent(a)$ and $X'=parent(b)$ such that $P^{i}[map(X)]=j$ and $P^{i}[map(X')]=j+1$. Assume their prefixes $R^{i}[a..a'-2]$ and $R^{i}[b..b'-2]$ share a suffix of length $l\geq 0$. We set $LCS^{i}[j+1]= O^{i}[a']-O^{i}[a'-l-1] + lcs^{i-1}(R^{i}[a'-l-2], R^{i}[b'-l-2])$.

\item $LCP^{i}[1..g^{i}]$: we scan the strings of $R^{i}$ from left to right. Let $R^{i}[a..a']$ and $R^{i}[b..b']$ be two strings in $R^{i}$ with $X=parent(a)$ and $X'=parent(b)=X+1$. Assume their suffixes $R^{i}[a+1..a']$ and $R^{i}[b+1..b']$ share a prefix of length $l\geq 0$. We set $LCS^{i}[map(X')]= O^{i}[b+l+1] + lcp^{i-1}(R^{i}[a+1+l], R^{i}[b+1+l])$.
\end{itemize}

%We now state the time and space complexity to build $\mathcal{L}$.

\begin{theorem}
Let $\mathcal{G}=\{\Sigma, V, \mathcal{R}, S\}$ be a fix-free grammar of size $G$ built with \textsc{FFGram} using the collection $\mathcal{T}=\{T_1, \ldots, T_{u}\}$. It is possible to obtain from $\mathcal{G}$ the list $\mathcal{L}$ with the prMEMs of $\mathcal{T}$ in $O(G+|\mathcal{L}|)$ time and $O(\log G(G + |\mathcal{L}|))$ bits. 
\end{theorem}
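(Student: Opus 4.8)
The plan is to prove the three claims — correctness, the $O(G+|\mathcal{L}|)$ time bound, and the $O(\log G(G+|\mathcal{L}|))$ space bound — by analysing the iterative algorithm one level at a time and then summing over the $h$ levels using $\sum_{i}G^{i}=G$ and $\sum_{i}|\mathcal{L}^{i}|=|\mathcal{L}|$, where $\mathcal{L}^{i}$ denotes the prMEMs reported while processing $\mathcal{R}^{i}$. The backbone of the correctness argument is an induction on $i$ whose invariant states that, at the start of iteration $i$, the structures $P^{i-1}$, $LCS^{i-1}$, $LCP^{i-1}$ correctly encode $lcs^{i-1}$ and $lcp^{i-1}$ in $O(1)$ time (the base case $i=1$ being trivial, since level-$0$ symbols are terminals and $lcs^{0},lcp^{0}$ are single-character comparisons), and that every prMEM anchored at a level smaller than $i$ has already been emitted exactly once. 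Lemma~\ref{lem:pomem} then reduces reporting at level $i$ to finding, for every pair of rules $X \to AZB$ and $Y \to CZD$ in $\mathcal{R}^{i}$, the maximal shared level-$(i-1)$ block $Z$ together with the side extensions measured by $lcs^{i-1}(A,C)$ and $lcp^{i-1}(B,D)$.

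For a single iteration I would argue as follows. By Definition~\ref{def:po_MEM} every prMEM has two primary occurrences, each crossing a phrase boundary, and by the fully-balanced, locally-consistent structure of $\mathcal{G}$ there is a unique lowest level $i$ at which the two occurrences are captured by nonterminals $X,Y\in V^{i}$ whose rules share the block $Z$. This is where fix-freeness (Lemma~\ref{lem:ffg}) and local consistency enter: because equal content is parsed identically, the shared part of the match surfaces literally as a common substring $Z$ of the right-hand sides in $R^{i}$, so the classical suffix-tree MEM algorithm~\cite{weiner1973linear} run on the sparse suffix array $A$ and the (modified) $GLCP$ of $R^{i}$ detects it; and because no level-$(i-1)$ expansion is a prefix or suffix of another, the left and right boundaries of the full match lie strictly inside the comparisons of $exp(A)$ against $exp(C)$ and $exp(B)$ against $exp(D)$, exactly as Lemma~\ref{lem:pomem} asserts. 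I would then verify that the sparse-suffix-array filtering (discarding suffixes that start at, or one past, a phrase boundary) is precisely what forces each reported match to cross a boundary in both strings, so reported matches are genuinely primary, and that the length $\ell=o+l$ and the offsets $o_X,o_Y$ computed from $O^{i}$, $lcs^{i-1}$, and the $rexp$-based $GLCP$ recover the true span and position; maximality in both directions follows from the MEM algorithm's maximality together with the fix-free boundary guarantee. Soundness and the ``exactly once'' property close the induction step, since the anchoring level is unique for each prMEM.

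For the complexity I would proceed level by level. Building $A$ and its LCP array for $R^{i}$ takes $O(G^{i})$ time, and rewriting $GLCP$ into $rexp$-lengths costs $O(1)$ per entry via the $O(1)$-time $lcp^{i-1}$ query and the $O^{i}$ lookups, hence $O(G^{i})$; running the MEM algorithm on $A$ and $GLCP$ costs $O(G^{i}+|\mathcal{L}^{i}|)$. The satellite structures are also built in $O(G^{i})$ time: $P^{i}$ is obtained by a radix/counting sort of the level-$i$ strings read right-to-left over the already-ranked integer alphabet given by $P^{i-1}$ (alphabet size at most $g^{i-1}$, so $O(G^{i}+g^{i-1})$ overall), and $LCS^{i}$, $LCP^{i}$ are filled by a single scan evaluating $lcs^{i-1},lcp^{i-1}$ in $O(1)$ and reading $O^{i}$; equipping both with linear-time $rmq$ support~\cite{fischermrq} reestablishes the invariant for iteration $i+1$. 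Summing gives $\sum_{i}O(G^{i}+g^{i-1}+|\mathcal{L}^{i}|)=O(G+|\mathcal{L}|)$. For space, at any moment the algorithm holds only a constant number of level-sized arrays ($R^{i},O^{i},A,GLCP$, and the $P,LCS,LCP$ triples of levels $i-1$ and $i$), each entry being a grammar index or an expansion length fitting in $O(\log G)$ bits, for $O(\log G\cdot G)$ bits total, plus the accumulated list $\mathcal{L}$ of $|\mathcal{L}|$ tuples of $O(\log G)$ bits each, yielding $O(\log G(G+|\mathcal{L}|))$ bits.

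The step I expect to be the main obstacle is the completeness half of correctness: showing that for every prMEM there is a single level $i$ at which the shared content is parsed into a contiguous, identical block $Z$ of level-$(i-1)$ symbols flanked by single symbols $A,B$ (resp. $C,D$), so that Lemma~\ref{lem:pomem} applies. This is exactly the point where the local consistency of \textsc{FFPar} must be used quantitatively — to guarantee that a sufficiently long common substring is eventually compressed identically in both occurrences and appears as a literal match among the right-hand sides — and where one must rule out a prMEM being split across the overlap regions that the $efexp/lexp/rexp$ bookkeeping is designed to absorb. Once this anchoring property is established, the boundary-crossing filter, the offset arithmetic, and the complexity accounting are routine.
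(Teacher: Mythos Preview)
Your proposal is essentially correct and, on the complexity side, follows the same per-level accounting as the paper: build $A$ and $GLCP$ for $R^{i}$ in $O(G^{i})$, run the suffix-tree MEM enumeration in $O(G^{i}+e^{i})$, build $P^{i},LCS^{i},LCP^{i}$ in $O(G^{i})$ using $O(1)$-time $lcs^{i-1}/lcp^{i-1}$ via $rmq$, then sum over levels. The paper's proof is in fact only this complexity analysis; it cites concrete tools you left generic (the $GLCP$-based node traversal of Abouelhoda et al., Lemma~11.4 of M\"akinen et al.\ for per-node MEM reporting, SAIS for the colexicographic permutation, and Fischer's $rmq$), but the skeleton is the same.

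Where you differ is that you also sketch correctness (soundness, completeness, uniqueness of the anchoring level), which the paper's proof does not attempt---it relies implicitly on Lemma~\ref{lem:pomem} and the algorithm description. Your identification of completeness as the delicate step is accurate, and indeed the paper does not discharge it inside this theorem's proof either; the locally-consistent parsing is what is supposed to guarantee that a long enough common substring eventually surfaces as a literal common block $Z$ at some level, but neither you nor the paper spells this out formally here. So your proposal is at least as complete as the paper's own argument, with the honest caveat you already flagged.
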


%takes $O(G^1 + e^1)$ time and $O(\log G^{1}(G^{1}+e^1))$ bits, where $e^{1}$ is the number of prMEMs in the grammar level $1$. We implement this task using

\begin{proof}
In iteration $i=1$, constructing $A$ and $GLCP$ takes $O(G_{1})$ time and $O(G^{1}\log G^{1})$ bits~\cite{n2009li,lcplinear}. Then, we implement the suffix-tree-based algorithm to report MEMs in $R^{i}$ by combining the method of Abouelhoda et al.~\cite{enhsa}, that visits the nodes of the compact trie induced by $GLCP$ in $O(G^{i})$ time, with Lemma 11.4 of M\"akinen et al.~\cite{makinen2015genome}, which reports the MEMs of every internal node. These ideas combined take $O(G^1 + e^1)$ time and $O(\log G^{1}(G^{1}+e^1))$ bits, where $e^{1}$ is the number of prMEMs in the grammar level $1$. The final step is to build $LCP^{1}, LCS^{1}$, and $P^{1}$. \textsc{FFGram} sorted the strings of $R^{1}$ in lexicographical order according to their $rexp$ expansions, so building $LCP^{1}$ reduces to scan $R^{1}$ from left to right and compute LCP values between consecutive elements. This process takes $O(G^{1})$ time and $O(G^{1}\log G^{1})$ bits. Then, we obtain $P^{1}$ by running SAIS in the reversed strings of $R^{1}$, which also takes $O(G^{1})$ time. We use the reversed strings to build $LCS^{1}$ as we did with $LCP^{1}$. Finally, giving $rmq$ support to $LCP^{1}$ and $LCP^{1}$ takes $O(G^{1})$ time and $O(G^{1}\log G^{1})$ bits if we use the data structure of~Johannes Fisher~\cite{fischermrq}. Summing up, the iteration $i$ runs in $O(G^{1}+e^1)$ time and uses $O(\log G^{1}(G^{1} + e^1))$ bits. Each iteration $i>1$ performs the same operations, but it also updates $GLCP$, $LCP^{i}$, and $LCS^{i}$. These updates require linear scans of the arrays as processing each position performs $O(1)$ access to $O^{i}$ and $O(1)$ calls to $lcs^{i-1}$ or $lcp^{i-1}$, which we implement in $O(1)$ time. Thus, the cost of iteration $i$ is $O(G^{i}+e^i)$ time and $O(\log G^{i}(G^{i} + e^i))$ bits, where $e^i$ is the number prMEMs in the grammar level $i$. Combining the $h$ grammar levels, the cost to compute $\mathcal{L}$ is $O(G+|\mathcal{L}|)$ time and $O(\log G(G+|\mathcal{L}|))$ bits. 

\end{proof}

\section{Positioning MEMs in the text}

The last aspect we cover to solve $\textsc{AvAMEM}(\mathcal{T}, \tau)$ is computing from ($\mathcal{L}, \mathcal{G}$) the positions in $\mathcal{T}$ of the MEMs. We assume that the collections $\{R^{1},\ldots, R^{h}\}$ of Section~\ref{sec:prmem} are concatenated in one single array $R[1..G]$, and that the collections $\{O^{1},\ldots, O^{h}\}$ are concatenated in another array $O[1..G]$. We define the function $stringid$, which takes as input an index $u$ within $R$ with $S=parent(u)$ (start symbol in $\mathcal{G}$), and returns the identifier of the string of $\mathcal{T}$ where $R[u]$ lies. 

We first simplify $\mathcal{G}$ to remove the unary paths in its parse tree (see Section~\ref{sec:simp_gram}). This extra step is not strictly necessary, but it is convenient to avoid redundant work.
%The values $G$ and $g$ in the grammar now refer to the number of rules and the sum of the right-hand sides in the simplified grammar, respectively.
After the simplification, we create an array $N[1..G]$, which we divide into $g$ buckets. Each bucket $b \in [1..g]$ stores in an arbitrary order the positions in $R$ for the occurrences of $b \in \Sigma \cup V$. Additionally, we create a vector $C[1..g]$ that stores in $C[b]$ the position in $N$ where the bucket for $b \in \Sigma \cup V$ starts.

We report MEMs as follows: we insert the tuples of $\mathcal{L}$ into a stack. Then, we extract the tuple $(X, Y, o_Y, o_X, \ell)$ from the top of the stack and compute $s_X=C[X], e_X=C[X+1]-1$ and $s_Y=C[Y], e_Y=C[Y+1]-1$. For every pair of indexes $(u,u') \in [s_X, e_X] \times [s_Y, e_Y]$, we get $X'=parent(N[u])$, and if $X'=S$, we set $X'=stringid(N[u])$. We do the same with $N[u']$ and store the result in a variable $Y'$. Now we produce the new tuple $(X', Y', o_{X'}=O[N[u]]+o_X, o_{Y'}=O[N[u']]+o_Y, \ell)$. If both $X'$ and $Y'$ are strings identifiers, we report the tuple $M(X',Y',o_{X'}, o_{Y'}, \ell)$ as an output of \textsc{AvAMEM}, otherwise we insert the new tuple into the stack. If $X'$ or $Y'$ is a string identifier rather than a nonterminal, we flag the tuple to indicate that one of the elements is a string. Thus, when we visit the tuple again, we avoid recomputing its values. The report of MEMs ends when the stack becomes empty. 

\begin{theorem}
Let $\mathcal{G}$ be a fix-free grammar of size $G$ constructed with \textsc{FFGram} using the collection $\mathcal{T}=\{T_1, \ldots, T_{u}\}$, and let $\mathcal{L}$ be the list of prMEMs in $\mathcal{G}$ with length $>\tau$, where $\tau$ is an input parameter. Given the simplified version of $\mathcal{G}$ and $\mathcal{L}$, it is possible to report the positions in $\mathcal{T}$ of the $occ$ MEMs of length $>\tau$ in $O(G+occ)$ time and $O(\log G(G + occ))$ bits.
\end{theorem}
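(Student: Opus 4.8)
The plan is to prove the statement in three parts — a cheap preprocessing step, then correctness (soundness, completeness, and no duplication), and finally the running-time bound, which is the delicate point. For the preprocessing I would note that $C$ is obtained by one counting pass over $R$ that tallies, for every $b\in\Sigma\cup V$, its number of occurrences, followed by a prefix sum, and that $N$ is filled by a second pass scattering each position of $R$ into the bucket of $R[u]$; both run in $O(G)$ time and use $O(G\log G)$ bits, while $parent$, $map$ and $stringid$ are the $O(1)$-time, $G+o(G)$-bit primitives already assumed. The simplification of $\mathcal{G}$ is done once in $O(G)$ time (Section~\ref{sec:simp_gram}), and I will invoke only its defining consequence below.

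For soundness I would argue that every emitted tuple is a genuine MEM. A starting tuple $(X,Y,o_X,o_Y,\ell)\in\mathcal{L}$ encodes a prMEM, and by Lemma~\ref{lem:pomem} with the fix-free property the matched pattern sits \emph{strictly} inside both $exp(X)$ and $exp(Y)$: since $lcs^{i-1}(A,C)<|exp(A)|$ and $lcp^{i-1}(B,D)<|exp(B)|$ (no same-level expansion is a prefix or suffix of another), there is at least one symbol of $exp(X)$ to the left of the occurrence and at least one to its right, and likewise inside $exp(Y)$. Hence the two mismatching symbols witnessing left- and right-maximality are \emph{interior} to $exp(X)$ and $exp(Y)$. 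The crucial observation is that these witnesses are invariant under lifting: when the algorithm replaces $X$ by a parent $X'=parent(N[u])$ and sets $o_{X'}=O[N[u]]+o_X$, the whole factor $exp(X)$ — and thus its two boundary symbols — is reproduced verbatim inside $exp(X')$, with $O[N[u]]$ contributing exactly the number of terminals that $efexp$ places before it. Maximality therefore survives every step, so when both components reach $S$ the pairs $(stringid,\text{offset})$ name occurrences in genuine strings $T_x,T_y$ whose match is still left- and right-maximal, and the reported $M(T_x,T_y,\cdot,\cdot,\ell)$ is a MEM.

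For completeness and uniqueness I would take an arbitrary MEM $M(T_x,T_y,a,a',\ell)$ with $\ell>\tau$: each of its two occurrences descends in the grammar tree from a unique lowest node whose rule it crosses, i.e. from a unique primary-enclosing nonterminal, and by the interior-witness argument these two nonterminals give exactly a prMEM that, being in the complete list produced in Section~\ref{sec:prmem}, lies in $\mathcal{L}$. The upward walk from a node enumerates its occurrences in bijection with the occurrences of the enclosing expansion in $\mathcal{T}$; since the $X$-side and $Y$-side walks are independent and the algorithm forms their Cartesian product — freezing via the flag whichever side reaches $S$ first, so that the possibly unequal post-simplification heights are reconciled — each pair of a top-level $X$-occurrence and a top-level $Y$-occurrence is produced exactly once, and $(a,a')$ is one such pair. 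Because the primary-enclosing ancestors are unique, no MEM can arise from two different prMEMs, so there is no duplication.

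The main obstacle is the time bound, and this is exactly where the simplified grammar enters. I would model the work on a single prMEM $(X,Y)$ as a traversal of the product of two ``occurrence trees'': a depth-$k$ node on the $X$-side is an occurrence of the pattern inside an ancestor expansion $k$ levels above $X$, and it has one child per occurrence of that ancestor. The simplification guarantees that every nonterminal other than $S$ occurs at least twice, so every internal node of either occurrence tree has at least two children; hence the layer sizes $occ_X^{(k)}$ and $occ_Y^{(k)}$ grow geometrically up to their final values $occ_X,occ_Y$ (the numbers of text occurrences this prMEM spawns), and $\sum_k occ_X^{(k)}occ_Y^{(k)}$ telescopes to $O(occ_X\,occ_Y)$, a constant times the MEMs this prMEM reports; freezing one factor via the flag only shortens the sum. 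Summing over $\mathcal{L}$, and using that each prMEM reports at least one MEM so $|\mathcal{L}|=O(occ)$, the expansion touches $O(occ)$ tuples, each handled in $O(1)$ time through $C$, $N$, $parent$, $O$ and $stringid$; with the $O(G)$ preprocessing this is $O(G+occ)$ time. For space, $N$ and $C$ occupy $O(G\log G)$ bits and at most $O(G+occ)$ tuples of $O(\log G)$ bits each are ever alive, giving $O(\log G\,(G+occ))$ bits. The subtle point to get right is precisely this geometric-growth accounting: without the ``at least twice'' property a chain of once-occurring nonterminals could make the intermediate tuple count exceed the output, and the simplification is exactly what forbids it.
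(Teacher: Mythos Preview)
Your proof is correct and follows essentially the same approach as the paper. Both build $N$ and $C$ in $O(G)$ time, process each prMEM by a simultaneous bottom-up climb on the two sides, and charge the work on a prMEM to $O(occ_X\cdot occ_Y)$ using the simplification guarantee that every remaining nonterminal has at least two occurrences; the paper outsources this last step to Claude and Navarro's secondary-occurrence enumeration, while you spell the geometric-growth argument out directly and also supply explicit soundness and completeness arguments that the paper leaves implicit.
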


\begin{proof}
The $G$ term in the time complexity comes from the grammar simplification and the construction of $N$ and $C$. Let $(X, Y, o_X, o_Y,\ell) \in \mathcal{L}$ be a prMEM whose sequence in $\mathcal{T}$ is $L$, with $|L|=\ell$. Let us assume $aLb$ is the primary occurrence of $L$ under $X$ and $xLz$ is the primary occurrence under $Y$. In our algorithm, the access pattern in $N$ simulates a bottom-up traversal of $\mathcal{G}$'s grammar tree that visits every node labelled $X'$ such that $exp(X')$ has $aLb$, and every node labelled $Y'$ such that $exp(Y')$ has $xLz$. Our idea is similar to reporting secondary occurrences in the grammar self-index of Claude and Navarro~\cite{cn2012im}. They showed that the cost of traversing the grammar to enumerate the $occ_{P}$ occurrences of a pattern $P$ amortizes to $O(occ_{P})$ time. The argument is that, in a simplified $\mathcal{G}$, each node we visit in the grammar tree yields at least one occurrence in $\mathcal{T}$. However, our traversal processes two patterns at the same time ($aLb$ and $yLz$), pairing each occurrence of one with each occurrence of the other. Thus, our amortized time to process a prMEM tuple is $O(occ_X \times occ_Y)$, where $occ_X$ and $occ_Y$ are the numbers of occurrences in $\mathcal{T}$ for $aLb$ and $yLz$, respectively. Summing up, the cost of processing all the tuples in $\mathcal{L}$ is $O(occ)$ time.
\end{proof}

\begin{corollary}
Let $\mathcal{T}$ be a string collection of $n$ symbols containing $occ$ MEMs of length $\geq \tau$, $\tau$ being a parameter. We can solve $\textsc{AvAMEM}(\mathcal{T}, \tau)$ by building a fix-free grammar $\mathcal{G}$ of size $G$ in $O(n)$ time and $O(G\log G)$ bits, and then computing the $occ$ MEMs of $\mathcal{T}$ over $\mathcal{G}$ in $O(G+occ)$ time and $O(\log G(G+occ))$ bits.
\end{corollary}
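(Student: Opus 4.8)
The plan is to assemble the Corollary by chaining together the three prior results and verifying that the complexity bounds compose without surprises. The final statement is essentially a summary: it claims we can (i) build the fix-free grammar $\mathcal{G}$ from $\mathcal{T}$ in $O(n)$ time and $O(G \log G)$ bits, and (ii) compute all $occ$ MEMs from $\mathcal{G}$ in $O(G+occ)$ time and $O(\log G(G+occ))$ bits. Part (ii) is simply the composition of the two earlier theorems. The first theorem gives the prMEM list $\mathcal{L}$ in $O(G+|\mathcal{L}|)$ time and $O(\log G(G+|\mathcal{L}|))$ bits; the second theorem, given $\mathcal{L}$, reports the positions of all $occ$ MEMs in $O(G+occ)$ time and $O(\log G(G+occ))$ bits. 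So the crux of carrying out (ii) is to argue that $|\mathcal{L}| = O(occ)$, i.e.\ that the number of prMEMs never exceeds the total number of MEMs we ultimately report.

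First I would establish the grammar-construction bound of part (i). By Lemma~\ref{lem:ffg}, the grammar produced by \textsc{FFGram} is fix-free, and \textsc{FFGram} is obtained by running the \textsc{FFPar} parsing (Lemma~\ref{lem:ffpar}) in successive rounds, exactly mirroring the round-based structure of \textsc{LCGram}. Since \textsc{LCGram} runs in $O(n)$ time and \textsc{FFPar} differs from \textsc{LCPar} only by the randomized hash $h^i$, the SAIS-style $L$/$S$/LMS classification, and a constant amount of left/right phrase extension per phrase, each parsing round remains linear in the length of its input string, and the total work telescopes to $O(n)$ as in the standard locally-consistent construction. For the space bound, the grammar has size $G$ and $g$ nonterminals, and each rule together with its $\mathcal{O}$-counterpart and the satellite arrays is stored in machine words of $O(\log G)$ bits, giving $O(G\log G)$ bits overall. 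I would also note that the construction simultaneously produces $\mathcal{O}$, as the excerpt assumes \textsc{FFGram} builds it.

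Next I would invoke the first theorem to produce $\mathcal{L}$, and here lies the one substantive step: showing $|\mathcal{L}| = O(occ)$. The intuition is that each prMEM in $\mathcal{L}$ is a \emph{primary} MEM (Definition~\ref{def:po_MEM}), meaning its occurrences in $\mathcal{T}$ cross phrase boundaries at the relevant grammar level; by the partition-into-primary/secondary structure of the grammar tree, every such prMEM corresponds to at least one genuine MEM occurrence in $\mathcal{T}$, and the positioning step of the second theorem expands each prMEM tuple into $occ_X \times occ_Y \geq 1$ output MEMs. Hence the map from $\mathcal{L}$ to reported MEMs is injective (indeed expanding), so $|\mathcal{L}| \leq occ$. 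Substituting $|\mathcal{L}| = O(occ)$ into the first theorem's bounds yields $O(G+occ)$ time and $O(\log G(G+occ))$ bits for computing $\mathcal{L}$, matching the second theorem's bounds; adding the two stages leaves the asymptotics unchanged.

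Finally I would combine the pieces: the $O(n)$ construction time and the $O(G+occ)$ MEM-computation time are reported separately in the statement, so no merging of the two time bounds is needed, and the space is the maximum of $O(G\log G)$ and $O(\log G(G+occ))$, which the statement records as the two phases' respective bounds. The step I expect to be the main obstacle is the bound $|\mathcal{L}| = O(occ)$: one must be careful that every prMEM genuinely surfaces as at least one reported MEM and that the length threshold $\tau$ is respected consistently across the two theorems (the first produces prMEMs with $\ell \geq \tau$ via Lemma~\ref{lem:pomem}, while the second reports MEMs of length $>\tau$). Reconciling the $\geq \tau$ versus $> \tau$ phrasing and confirming that primary MEMs are not double-counted or spuriously generated across grammar levels is the delicate part; everything else is routine substitution into the already-established theorems.
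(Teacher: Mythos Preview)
Your proposal is correct and matches the paper's intent; in fact the paper states the Corollary without any proof at all, treating it as an immediate consequence of the two preceding theorems, so your write-up is strictly more detailed than what the paper provides. Your identification of the one nontrivial link---that $|\mathcal{L}|\le occ$ because each prMEM tuple expands into at least one reported MEM in the positioning step---is exactly the implicit step the paper glosses over, and your flagging of the $\ge\tau$ versus $>\tau$ inconsistency is a genuine (if minor) wrinkle in the paper's own phrasing rather than a defect in your argument.
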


\section{Concluding remarks}

We have presented a method to compute all-vs-all MEMs that rely on grammar compression to reduce memory overhead and save redundant calculations. In particular, given a collection $\mathcal{T}$ of $n$ symbols and $occ$ MEMs, we can get a grammar $\mathcal{G}$ of size $G$ in $O(n)$ time and $O(G\log G)$ bits of space, and find the MEMs of $\mathcal{T}$ on top of $\mathcal{G}$ in $O(G+occ)$ time and using $O(\log G(G+occ))$ bits. We believe our framework is of practical interest as it uses mostly plain data structures that store satellite data about the grammar. Besides, we can either compute the MEMs at the same time we construct the grammar or do it later. However, it remains open to check how far is $G$ from $O(\gamma \log \frac{n}{\gamma})$. The comparison is reasonable as \textsc{FFGram}, our grammar algorithm, resembles the locally-consistent grammar of Christiansen et al.~\cite{christiansen2020optimal}, which achieves that bound. Still, it is unclear to us how the overlap produced by \textsc{FFGram} affects the bound. Overlapping phrases has an exponential effect on the grammar size but also affects how the text is parsed. An interesting idea would be to find a way to chain prMEMs as approximate matches and then report those in the last step of our algorithm instead of the MEMs. An efficient implementation of such a procedure could significantly reduce the cost of biological sequence analyses in massive collections. 

\bibliography{references}

\begin{thebibliography}{10}

\bibitem{enhsa}
Mohamed~Ibrahim Abouelhoda, Stefan Kurtz, and Enno Ohlebusch.
\newblock Replacing suffix trees with enhanced suffix arrays.
\newblock {\em Journal of Discrete Algorithms}, 2(1):53--86, 2004.

\bibitem{blast}
Stephen~F. Altschul, Warren Gish, Webb Miller, Eugene~W. Myers, and David~J.
  Lipman.
\newblock Basic local alignment search tool.
\newblock {\em Journal of Molecular Biology}, 215(3):403--410, 1990.

\bibitem{batu2006oblivious}
Tugkan Batu, Funda Ergun, and Cenk Sahinalp.
\newblock Oblivious string embeddings and edit distance approximations.
\newblock In {\em Proc. 17th Symposium on Discrete Algorithms (SODA)}, pages
  792--801, 2006.

\bibitem{bou21pho}
Christina Boucher, Travis Gagie, Tomohiro I, Dominik K{\"o}ppl, Ben Langmead,
  Giovanni Manzini, Gonzalo Navarro, Alejandro Pacheco, and Massimiliano Rossi.
\newblock {PHONI}: Streamed matching statistics with multi-genome references.
\newblock In {\em Proc. 21st Data Compression Conference (DCC)}, pages
  193--202, 2021.

\bibitem{cha94sub}
William~I. Chang and Eugene~L. Lawler.
\newblock Sublinear approximate string matching and biological applications.
\newblock {\em Algorithmica}, 12(4):327--344, 1994.

\bibitem{CLLPPSS05}
Moses Charikar, Eric Lehman, Ding Liu, Rina Panigrahy, Manoj Prabhakaran, Amit
  Sahai, and Abhi Shelat.
\newblock The smallest grammar problem.
\newblock {\em IEEE Transactions on Information Theory}, 51(7):2554--2576,
  2005.

\bibitem{christiansen2020optimal}
Anders~Roy Christiansen, Mikko~Berggren Ettienne, Tomasz Kociumaka, Gonzalo
  Navarro, and Nicola Prezza.
\newblock Optimal-time dictionary-compressed indexes.
\newblock {\em ACM Transactions on Algorithms}, 17(1):1--39, 2020.

\bibitem{cn2012im}
Francisco Claude and Gonzalo Navarro.
\newblock Improved grammar-based compressed indexes.
\newblock In {\em Proc. 19th SPIRE}, pages 180--192, 2012.

\bibitem{c2020gr}
Francisco Claude, Gonzalo Navarro, and Alejandro Pacheco.
\newblock Grammar-compressed indexes with logarithmic search time.
\newblock {\em Journal of Computer and System Sciences}, 118:53--74, 2021.

\bibitem{cole1986deterministic}
Richard Cole and Uzi Vishkin.
\newblock Deterministic coin tossing and accelerating cascades: micro and macro
  techniques for designing parallel algorithms.
\newblock In {\em Proc. 18th Anual Symposium on Theory of Computing (STOC)},
  pages 206--219, 1986.

\bibitem{diaz21gram}
Diego D{\'\i}az-Dom{\'\i}nguez and Gonzalo Navarro.
\newblock A grammar compressor for collections of reads with applications to
  the construction of the {BWT}.
\newblock In {\em Proc. 31st Data Compression Conference (DCC)}, pages 83--92,
  2021.

\bibitem{fischermrq}
Johannes Fischer.
\newblock Optimal succinctness for range minimum queries.
\newblock In {\em Proc. 9th Latin American Symposium}, pages 158--169, 2010.

\bibitem{g2018op}
Travis Gagie, Gonzalo Navarro, and Nicola Prezza.
\newblock Fully-functional suffix trees and optimal text searching in
  {BWT}-runs bounded space.
\newblock {\em Journal of the ACM}, 67(1):article 2, 2020.

\bibitem{jez2015approximation}
Artur Je{\.z}.
\newblock Approximation of grammar-based compression via recompression.
\newblock {\em Theoretical Computer Science}, 592:115--134, 2015.

\bibitem{blat}
W.~James Kent.
\newblock {BLAT}—the {BLAST}-like alignment tool.
\newblock {\em Genome Research}, 12(4):656--664, 2002.

\bibitem{ki2000gr}
John Kieffer and En-Hui Yang.
\newblock Grammar-based codes: a new class of universal lossless source codes.
\newblock {\em IEEE Transactions on Information Theory}, 46(3):737--754, 2000.

\bibitem{mummer}
Stefan Kurtz, Adam Phillippy, Arthur~L. Delcher, Michael Smoot, Martin Shumway,
  Corina Antonescu, and Steven~L. Salzberg.
\newblock Versatile and open software for comparing large genomes.
\newblock {\em Genome Biology}, 5:1--9, 2004.

\bibitem{lcplinear}
Gad~M. Landau, Toru Kasai, Gunho Lee, Hiroki Arimura, Setsuo Arikawa, and
  Kunsoo Park.
\newblock Linear-time longest-common-prefix computation in suffix arrays and
  its applications.
\newblock In {\em Proc. 12th Annual Symposium on Combinatorial Pattern Matching
  (CPM)}, pages 181--192, 2001.

\bibitem{bowtie}
Ben Langmead and Steven~L. Salzberg.
\newblock Fast gapped-read alignment with bowtie 2.
\newblock {\em Nature Methods}, 9(4):357--359, 2012.

\bibitem{li2013aligning}
Heng Li.
\newblock Aligning sequence reads, clone sequences and assembly contigs with
  {BWA-MEM}.
\newblock {\em arXiv preprint arXiv:1303.3997}, 2013.

\bibitem{minimap2}
Heng Li.
\newblock Minimap2: pairwise alignment for nucleotide sequences.
\newblock {\em Bioinformatics}, 34(18):3094--3100, 2018.

\bibitem{makinen2015genome}
Veli M{\"a}kinen, Djamal Belazzougui, Fabio Cunial, and Alexandru Tomescu.
\newblock {\em Genome-Scale Algorithm Design}.
\newblock Cambridge University Press, 2015.

\bibitem{MM93}
Udi Manber and Gene Myers.
\newblock {Suffix arrays: a new method for on--line string searches}.
\newblock {\em SIAM Journal on Computing}, 22(5):935--948, 1993.

\bibitem{mccreight1976space}
Edward~M. McCreight.
\newblock A space-economical suffix tree construction algorithm.
\newblock {\em Journal of the ACM}, 23(2):262--272, 1976.

\bibitem{navcpmmem}
Gonzalo Navarro.
\newblock Computing {MEM}s on repetitive text collections.
\newblock In {\em Proc. 34th Annual Symposium on Combinatorial Pattern Matching
  (CPM)}, page article 22, 2023.

\bibitem{n2009li}
Ge~Nong, Sen Zhang, and Wai~Hong Chan.
\newblock Linear suffix array construction by almost pure induced-sorting.
\newblock In {\em Proc. 19th Data Compression Conference (DCC)}, pages
  193--202, 2009.

\bibitem{n2018gr}
Daniel Saad~Nogueira Nunes, Felipe~A. Louza, Simon Gog, Mauricio
  Ayala-Rinc{\'{o}}n, and Gonzalo Navarro.
\newblock {A grammar compression algorithm based on induced suffix sorting}.
\newblock In {\em Proc. 28th Data Compression Conference (DCC)}, pages 42--51,
  2018.

\bibitem{ohl10cst}
Enno Ohlebusch, Johannes Fischer, and Simon Gog.
\newblock {CST}++.
\newblock In {\em Proc. 17th International Symposium on String Processing and
  Information Retrieval (SPIRE)}, pages 322--333, 2010.

\bibitem{ros22fin}
Massimiliano Rossi, Marco Oliva, Paola Bonizzoni, Ben Langmead, Travis Gagie,
  and Christina Boucher.
\newblock Finding maximal exact matches using the r-index.
\newblock {\em Journal of Computational Biology}, 29(2):188--194, 2022.

\bibitem{ros22mon}
Massimiliano Rossi, Marco Oliva, Ben Langmead, Travis Gagie, and Christina
  Boucher.
\newblock {MONI}: A pangenomic index for finding maximal exact matches.
\newblock {\em Journal of Computational Biology}, 29(2):169--187, 2022.

\bibitem{sad07comp}
Kunihiko Sadakane.
\newblock Compressed suffix trees with full functionality.
\newblock {\em Theory of Computing Systems}, 41(4):589--607, 2007.

\bibitem{sahinalp1994parallel}
Suleyman~Cenk Sahinalp and Uzi Vishkin.
\newblock On a parallel-algorithms method for string matching problems
  (overview).
\newblock {\em Algorithms and Complexity}, pages 22--32, 1994.

\bibitem{weiner1973linear}
Peter Weiner.
\newblock Linear pattern matching algorithms.
\newblock In {\em Proc. 14th Annual Symposium on Switching and Automata Theory
  (SWAT)}, pages 1--11, 1973.

\end{thebibliography}
\appendix
\clearpage\section*{Appendix}

\section{Figures and Examples}

\begin{figure}[!htp]
\centering
\includegraphics[width=\textwidth]{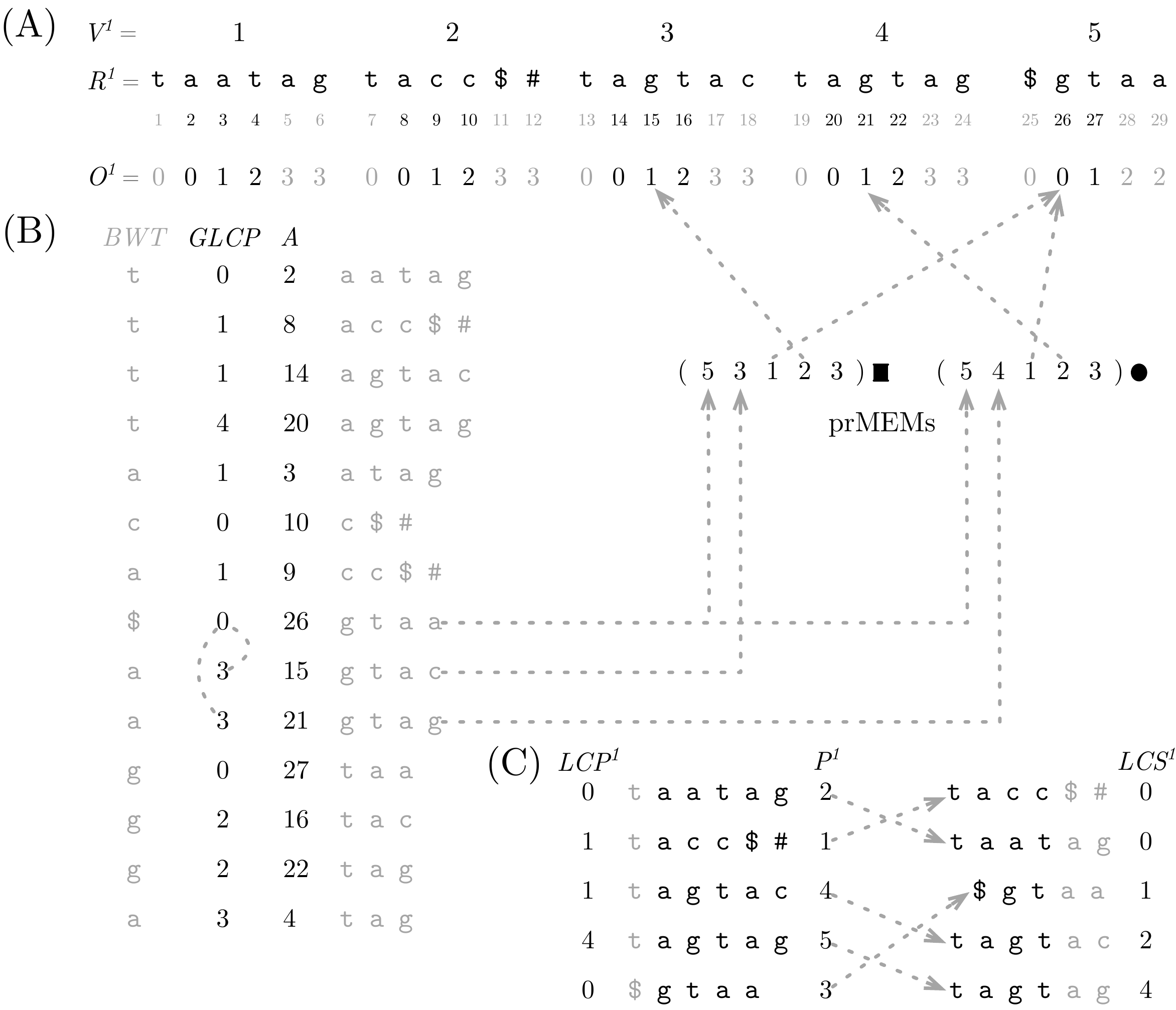}
\caption{First iteration to find prMEMs with $\tau=3$ in the grammar of Figure~\ref{fig:ovp_gramp}. (A) The encoding of $\mathcal{R}^{i}$. Numbers between $R^{1}$ and $O^{i}$ are indexes. Grey indexes denote the suffixes of $R^{i}$ that are not included in the sparse suffix array $A$. (B) Detection of prMEMs in the grammar level $1$. The dashed lines in $GLCP$ indicate the MEMs found by the suffix-tree-based MEM algorithm, and the dashed arrows map the prMEMs to their corresponding prMEM tuples. The $BWT$ shows the left context of the suffixes. (C) The satellite data structures for the next iteration. The symbols in grey are those that are ignored in the string sorting}
\label{fig:prmem_it1}
\end{figure}

\begin{example}
Computing prMEMs in level 1 (Figure~\ref{fig:prmem_it1}) of the grammar of Figure~\ref{fig:ovp_gramp}. The simulation of the suffix-tree-based algorithm reported the MEMs $(A[8]=26, A[9]=15, 3)$ and $(A[8]=26, A[10]=21, 3)$ in $R^{1}$ (dashed lines in $GLCP$). These pairs correspond to the occurrences $R^1[26]$, $R^{1}[15]$ and $R^{1}[21]$ of \texttt{gta}. The operation $parent$ gives the nonterminals $5=parent(26)$, $3=parent(15)$ and $4=parent(21)$, which are stored in the prMEM tuples and marked with dashed arrows. We compute the offsets $1,2$ within the left prMEM tuple $(5,3,1,2,3)$ by accessing the positions $1=O[26]+1$ and $2=O[15]+1$, respectively. These offsets are the relative positions of the MEM within $efexp(5)$ and $efexp(3)$, respectively. The last element of the tuple, $3$, comes from the triplet reported by the suffix-tree-based MEM algorithm and corresponds to the length $GLCP[9]=3=|\texttt{gta}|$ of the MEM. We compute the information of the right prMEM tuple $(5,4,1,2,3)$ in the same way but using its corresponding positions $A[8]=26$ and $A[10]=21$.
\end{example}

\begin{figure}[!htp]
\centering
\includegraphics[width=\textwidth]{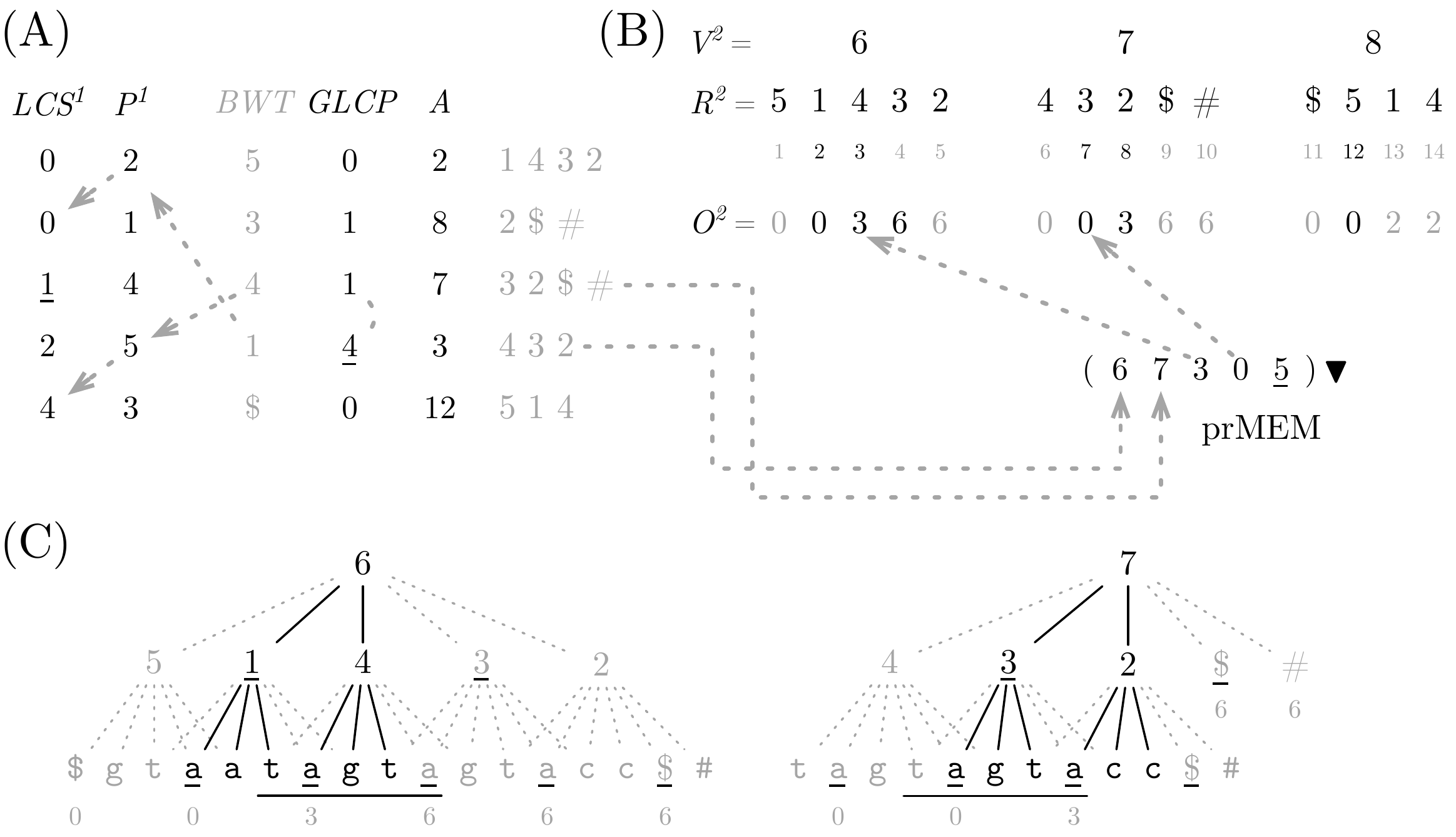}
\caption{Second iteration to find prMEMs with $\tau \geq 3$ in the grammar of Figure~\ref{fig:ovp_gramp}. (A) Data structures to compute the prMEMs. The $GLCP$ values are already transformed to encode the longest common prefixes between $rexp$ expansions of the $R^{i}$ suffixes in $A$. The $BWT$ column shows the left context of each $R^{i}$ suffix in $A$. The dashed arrows indicate the steps to get the left boundary of the prMEM. (B) Encoding of $R^{2}$ and $O^{2}$. The indexes in grey indicate the suffixes of $R^{i}$ that are not in $A$. (C) Offset sequence in $O$ for the nonterminals $6$ and $7$.}
\label{fig:prmem_it2}
\end{figure}

\begin{example}
Computing prMEMs in level 2 (Figure~\ref{fig:prmem_it2}) of the grammar of Figure~\ref{fig:ovp_gramp}. The suffix-tree-based algorithm reported the triplet $(A[3]=7, A[4]=3, 4)$ as a MEM (dashed line in $GLCP$). The corresponding suffixes $R^{i}[A[3]=7..]=32\$\#$ and $R^{i}[A[4]=3..]=432$ do not share a prefix, but $GLCP$ indicates their $rexp$ expansions $rexp(32\$\#)=\texttt{agtacc\$\#}$ and $rexp(432)=\texttt{agtagtacc\$\#}$ do share a prefix of length $GLCP[4]=4$. Still, the MEM is incomplete as the match can be extended to the left. To find the left-extension, we access the symbols $R^{2}[A[3]-1=6]=4$ and $R^{2}[A[4]-1=2]=1$ and compute the longest common suffix of their $lexp$ expansions. This step require us to visit $P^{1}[4]=5$ and $P^{1}[1]=2$, and compute $lcs^{1}(4, 1) = rmq(LCS^{1}, 2, 5)=1$ (underlined value in $LCS^{1}$). We add $1$ to the current MEM length to get the final length $\ell=4+1=5$ (rightmost value in the prMEM tuple). We compute $6$ and $7$ in the prMEM tuple as before, i.e., via $parent$. The values $3$ and $0$ indicate the positions of the MEM within $efexp(6)$ and $efexp(7)$. As before, we use $O$ to get these values but also subtract the length of the left extension. Thus, $3=O[A[3]]-1+1$ and $0=O[A[4]]-1+1$. Notice that the offset $0\leq1$ indicates that the MEM starts before the relative position of $efexp(7)$ within $lexp(7)$. Figure~\ref{fig:prmem_it2}C depicts this idea about the offset. 
\end{example}

%\section{Computing the MEMs in the compact trie}
%The suffix-tree-based MEM algorithm reports tuples in the form $(u,u',\ell)$, which represents an internal node $v$ in the compact trie induced by $GLCP$ encoding a MEM: $v$ has more than weiner link. We need to process $v$ to extract each possible pair $(j,j') \in [u..u'] \times [u..u']$ compatible with the MEM definition. That is, $A[j]$ and $A[j']$ belong to different children of $v$ and $A[j-1]\neq A[j'-1]$ are different weiner links of $v$. To carry out the reporting, we divide $A[u..u']$ according their contexts: each group stores the elements of $A[u..u']$ flanked by an specific context $(a,b)$ as $a{\cdot}R^{i}[A[j]..A[j]+d-1]{\cdot}c$, with $j \in [u..u']$.

%$W$ and partition it in $\sigma$ buckets, where $\sigma$ is the number of children for $v$. We sort each bucket $W[c_{i}..c_{i'}]$ and further partitioning it into subbuckets so that $W[c_{i,j}..c_{i',j'}]$ contains equal symbol of $W[c_{i}..c_{i'}]$. Now, we scan each $W[c_i]$

\section{Simplifying the grammar}\label{sec:simp_gram}

The simplification consists in recursively removing from $R$ each nonterminal $X \in V$ that has one occurrence and leaving just its replacement. Suppose the left-hand side of $X \rightarrow F$ occurs only once in $R$ in the string for the rule $X' \rightarrow AXZ$. Then, we remove $F$ from $R$ and reinsert it in the area of $AXZ$ to replace the symbol $X$ (i.e., the right-hand side for $X'$ becomes $AFZ$). This process is recursive because we also simplify the sequence of $F$. Additionally, we update the $efexp$ expansions stored in $O$. The idea is similar to what we did in $R$: the value $e_X$ in $X' \rightarrow e_Ae_Xe_Z \in \mathcal{O}$ associated with $X$'s occurrence in $X' \rightarrow AXZ$ is now replaced by the offset sequence in $O$ for the string $F$. We also need to update $F$'s offset sequence by adding $efexp(A)$ to its values. The simplification requires renaming the nonterminals so they form a contiguous range of values again.

We keep track of the changes we made during the simplification to update the values in $\mathcal{L}$. In particular, we create a vector $K[1..g]$ storing for each $X \rightarrow F$ the nonterminal $K[X]=X' \in V$ that now encloses the simplified version of $F$. Let $X' \rightarrow A_sF_sZ_s$, with $A_s,F_s,Z_s \in (\Sigma \cup V)^{*}$ be the simplified rule for $X'$. Another vector $E$ stores in $E[X]=|A_s|$ the number of symbols in the prefix $A_s$ preceding $F_s$. Now we scan $\mathcal{L}$, and for each tuple $(X, Y, o_X, o_Y, \ell)$, we update the symbols $X=K[X], Y=K[Y]$, and their corresponding offsets $(o_X=o_X+O[E[X]], o_Y=o_Y+O[E[Y]])$. After simplifying the grammar and update $\mathcal{L}$, we get rid of $E$ and $K$.

%\section{Practical considerations}
%One of the main disadvantages of the method we presented in this article is the excessive overlap between consecutive nonterminals to ensure the fix-free property. That is, nonterminals occurring consecutively in the right-hands of $\mathcal{R}$ expand to overlapping segments of $\mathcal{T}$. Our mechanism to ensure this property (\textsc{FFPar}) is simple to implement, but it can make the resulting grammar much bigger than one of the same kind that does not have overlap.
%Ideally, we would like to maximize compression while minimizing the overlap, and without a penaly in the running time of the grammar algorithm.
%One simple observation to achieve this goal is that, if a parsing phrase $F$ in $\mathcal{T}^{i}$ (the text in the $ith$ round of $\textsc{FFGram}$) contains symbols with frequency one, then we immediately know that $F$ cannot be a suffix or a prefix of another parsing phrase. Our idea to reduce overlap is thus merge in one super phrase consecutive phrases of $\mathcal{T}^{i}$ that have at least one unique symbol. This modification reduces space usage because the grammar, and its corresponding satellite data structures in our MEM algorithm, now contain less entries. When the symbols in a phrase produced by \textsc{FFGram} are all repeated, we keep the phrase as it is because it can have other occurrences in $\mathcal{T}^{i}$ we do not know.  
\end{document}